\documentclass{article}
\usepackage[alg]{klmm}

\newcommand{\F}{\mathbb{F}}

\title{Deterministic Polynomial-time Exact-root Computation  for Sparse Polynomials with Bounded Total Degree} 
\author{
    Qiao-Long Huang\institution{Shandong University, School of Mathematics,
Jinan, China}
    \and Yichuan Cao\institution{State Key Laboratory of Mathematical Sciences, Academy of Mathematics and Systems Science, Chinese Academy of Sciences; University of Chinese Academy of Sciences, Beijing China}
    \and Ruichen Qiu\instref{2}
    \and Xiao-Shan Gao\instref{2}
}
\date{\today}

\newcommand{\supp}{\operatorname{supp}}
\newcommand{\indeg}{\operatorname{indeg}}
\newcommand{\poly}{\operatorname{poly}}

\newcommand{\tdeg}{\operatorname{tdeg}}

\newcommand{\Dlin}{D_{\mathrm{lin}}}

\numberwithin{theorem}{section}

\begin{document}
\maketitle

\begin{abstract}
We study the problem of deterministically computing the exact root of a sparse polynomial in the multivariate setting. Let $f \in \F[x_1,\ldots,x_n]$ be a nonzero polynomial that is an exact $e$-th power, say $f = g^e$. Suppose $f$ is $s$-sparse, has an individual degree of at most $d$, and a total degree of $D = \tdeg(f)$. We prove a sparsity bound on the base polynomial $g$:
\[
\|g\|_0 \le s^{D(2d+2)/e + 1}.
\]
Based on this bound, we develop a deterministic algorithm that computes the base $g$.
In contrast to the general deterministic factorization algorithm of Bhargava, Saraf, and Volkovich \cite{BhargavaSarafVolkovich2020}, which achieves only a quasi-polynomial dependence on the input parameters, our algorithm is \emph{polynomial-time} in the setting where the total degree $D$ is bounded. Specifically,  the overall complexity is
\[
\mathrm{poly}\left(s^{O(Dd)}, n, d, D\right) + s\cdot R(e),
\]
where $R(e)$ denotes the cost of constructing a single $e$-th root of a scalar in the base field $\F$, and, when $\operatorname{char}(\F)\mid e$, the cost of computing a single Frobenius root of a scalar.
This term is field-dependent, and over finite fields, $\mathbb{Q}$, or number fields with a suitable representation, it is absorbed into the polynomial complexity bound. 
Within the bounded total-degree regime, this yields a deterministic polynomial-time algorithm for exact-root computation.
\end{abstract}

\section{Introduction}

A polynomial is called $s$-sparse if it has at most $s$ nonzero monomials. The sparse (or lacunary) representation of a multivariate polynomial can be exponentially more compact than the dense representation: a polynomial such as $x^N - 1$ requires only $O(\log N)$ bits to specify, while its dense representation has a length of $O(N)$. This extreme compression makes sparse polynomials attractive in algebraic complexity theory, coding theory, and symbolic computation, but it also poses fundamental challenges for deterministic algorithm design. In particular, algorithms whose complexity depends polynomially on the total degree $D$ become exponential in the input size when the polynomial is given sparsely, since $D$ can be exponential in the representation length.

One of the central problems in this area is sparse polynomial factorization: given a sparse polynomial $f$, compute its irreducible factors in a representation that is also concise. Bhargava, Saraf, and Volkovich~\cite{BSV2020} gave a deterministic factorization algorithm for sparse polynomials with bounded individual degree. Their framework establishes a general factor-sparsity bound: if an $s$-sparse polynomial has an individual degree of at most $d$, then every factor has sparsity bounded by a function of $s$ and $d$, enabling deterministic reconstruction. However, their algorithm achieves only quasi-polynomial dependence on the sparse parameters, leaving room for improvement in special but important cases.

A particularly fundamental special case is the computation of exact powers: given a polynomial $f$, determine whether $f = g^e$ for some polynomial $g$ and integer $e \ge 2$, and if so, compute $g$. This problem arises as a subroutine in complete factorization algorithms, in square-free decomposition, and in detecting prime powers prior to integer factorization. For dense polynomials, exact-power testing and root extraction are classical problems solved efficiently via square-free decomposition or Newton iteration. In the sparse setting, however, these methods break down because they depend polynomially on the degree $D$, which may be exponential in the input size.

Bisht and Volkovich~\cite{BishtVolkovich2025} made significant progress on the structural side, proving a sparsity bound for exact roots under a reverse-monic condition: if an $s$-sparse reverse-monic polynomial $H$ with individual degree at most $\Delta$ satisfies $H = R^e$, then $\|R\|_0 \le s^{\Delta/e + 1}$. This result provides a key structural insight, but it applies only to polynomials that are already in reverse-monic form and does not, by itself, yield a complete deterministic algorithm for arbitrary sparse inputs. Giesbrecht and Roche~\cite{GR2011} gave randomized algorithms for detecting lacunary perfect powers and computing their roots, with polynomial-time Monte Carlo detection and root computation that is either output-sensitive or conditional on a conjecture of Schinzel. Their work demonstrates the practicality of sparse perfect-power algorithms but leaves open the question of deterministic, unconditional polynomial-time computation.

In this paper, we address this gap by providing a complete deterministic treatment of the exact-power problem for sparse polynomials. Our main contributions are as follows.

First, we prove a sharp sparsity bound for the base polynomial $g$ in an exact power $f = g^e$. Let $f \in \mathbb{F}[x_1,\ldots,x_n]$ be nonzero $s$-sparse with $\operatorname{indeg}(f) \le d$ and total degree $D = \tdeg(f)$. If $f = g^e$, then
\[
\|g\|_0 \;\le\; s^{\,D(2d+2)/e \,+\, 1}.
\]
The proof proceeds by applying a scalar weighted substitution to transform an arbitrary exact power into a pseudo-monic form in a new variable $y$, to which the pseudo-monic exact-root lemma of Bisht and Volkovich applies. The resulting bound depends only on $s$, $D$, $d$, and $e$, and is independent of the number of variables $n$.

Second, we develop a deterministic algorithm that computes the base $g$ when $f = g^e$, and correctly reports failure otherwise. The algorithm uses a bounded separating vector to isolate a unique lowest monomial of $f$, then constructs a pseudo-monic transformed polynomial $F^\sharp = (G^\sharp)^e$ via the substitution $x_i \mapsto x_i y^{k_i}$. The normalized root is recovered by a $y$-adic binomial expansion; in positive characteristic, where the characteristic divides $e$, we apply a true inverse Frobenius root reduction before the binomial step. After undoing the normalization and substituting $y=1$, the candidate root is verified by the exact equality $g_{\text{cand}}^e = f$. The overall complexity is
\[
\operatorname{poly}\!\left(s^{O(Dd)},\, n,\, d,\, D\right) \;+\; R(e),
\]
where $R(e)$ encapsulates the field-dependent cost of testing and computing $e$-th roots of scalars in the base field $\mathbb{F}$. Over finite fields, $\mathbb{Q}$, or number fields with suitable representation, $R(e)$ is absorbed into the polynomial complexity bound. In the bounded-total-degree regime, this yields a deterministic polynomial-time algorithm for exact-root computation, contrasting with the quasi-polynomial dependence of the general BSV factorization framework.

Third, we prove a sparsity bound for multilinear cofactors in an oriented factorization $f = g h$. If $h$ is multilinear, $\operatorname{indeg}(f) \le d$, and $D = \tdeg(f)$, then
\[
\|h\|_0 \le s, \qquad \|g\|_0 \le s^{(2d+2)D + 2}.
\]
This bound is obtained by the same weighted-substitution technique, combined with the pseudo-monic cofactor bound of Bisht and Volkovich. It provides structural information for factorization problems involving multilinear divisors, complementing the exact-power results.

\subsection{Our Contributions}

The first result is a exact-power sparsity bound. For $f=g^e$ with $\|f\|_0\leq s$, $\indeg(f)\leq d$, and $D=\tdeg(f)$, then
\begin{equation}
\label{eq-bd1}
      \|g\|_0\leq s^{D(2d+2)/e+1}.
\end{equation}

The proof proceeds by transforming the input into a pseudo-monic form via a scalar weighted substitution, to which the pseudo-monic exact-root lemma is applied.

The second result is a deterministic algorithm for computing the $e$-th root of a sparse polynomial. The input consists of a nonconstant polynomial $f\in \F[x_1,\ldots,x_n]$ and an integer $e\ge 2$. If there exists $g\in \F[x_1,\ldots,x_n]$ such that $f=g^e$, the algorithm outputs such a $g$; otherwise, it outputs ``not an $e$-th power.'' The algorithm proceeds via a scalar pseudo-monic transformation, followed by a $y$-adic binomial expansion with Frobenius reduction in positive characteristic, and concludes with an exact verification step. In contrast to the general deterministic factorization algorithm of Bhargava, Saraf, and Volkovich, which achieves only quasi-polynomial dependence on the input parameters, our algorithm is polynomial-time in the setting where the total degree $D$ is bounded. Specifically, the overall complexity is
\[
  \mathrm{poly}\bigl(s^{O(Dd)}, n, d, D\bigr) + s\cdot R(e),
\]
where $R(e)$ denotes the cost of constructing a single $e$-th root of a scalar in the base field $\F$, and, when $\operatorname{char}(\F)\mid e$, the cost of computing a single Frobenius root of a scalar.
Over finite fields, $\mathbb{Q}$, or number fields with a suitable representation, this term is absorbed into the polynomial complexity bound.

The third result is a source-supported multilinear-cofactor bound. For a nonzero oriented factorization $f=gh$ with $\operatorname{indeg}(f)\leq d$, $D=\tdeg(f)$, and $h$ multilinear, we have 
\begin{equation}
\label{eq-bd2}
\|h\|_0\leq s \quad\hbox{\rm and}\quad  \|g\|_0\leq s^{(2d+2)D+2}.
\end{equation}

The main results presented in this paper are obtained through an interaction between the authors and an artificial intelligence agent system, \textit{MechMath Agent Team (MMAT)}~\cite{MMAT}.
The authors assume full responsibility for the paper’s content.
Example
\subsection{Example}

The following example illustrates that the sparsity bound must depend only on the degree and not on the number of variables.

\begin{example}
    \label{thm:q1-proof}
For every \(n\geq 1\) and \(r\geq 1\), there are explicit polynomials
\[
  g,f\in \mathbb F_2[y,x_1,\ldots,x_n]
\]
and an exponent \(e=3\) such that \(f=g^e\),
\[
  \|f\|_0=s=(n+1)^2,
  \qquad
  \operatorname{inddeg}(f)=d=2^{2r-1}+1,
\]
and
\[
  \|g\|_0=s^{(\log_2(d-1)+1)/4}.
\]
\end{example}

\paragraph{Construction.}
Let
\[
  M=2r-1,\qquad q=2^M,\qquad D=\frac{q+1}{3},\qquad
  H=x_1+\cdots+x_n,\qquad z=yH.
\]
The binary identity used in the proof is
\[
  D=\sum_{b\in B}2^b,\qquad
  B=\{0\}\cup\{\,2a-1:1\leq a\leq r-1\,\}.
\]
Thus the nonzero binary positions of \(D\) are
\[
  0,1,3,5,\ldots,2r-3
\]
with only position \(0\) when \(r=1\), and \(3D=q+1\).  Define
\[
  g=(1+z)^D,\qquad f=g^3.
\]
Then \(f=(1+z)^{q+1}\), so the exact-root relation holds.  Since \(z=0\) when
\(y=0\), the pivot specialization is \(f|_{y=0}=1\).

The support count for \(f\) is explicit:
\[
  f=(1+z)^{q+1}=(1+z^q)(1+z)
    =1+z+z^q+z^{q+1}.
\]
The four terms have distinct \(y\)-degrees.  The last term contributes the
\(n^2\) distinct monomials \(y^{q+1}x_i^q x_j\), and the first three
contribute \(1,n,n\) monomials, respectively.  Hence
\[
  \|f\|_0=1+n+n+n^2=(n+1)^2.
\]
For \(g\), the Frobenius expansion gives
\[
  g=\prod_{b\in B}(1+z^{2^b})
   =\sum_{S\subseteq B} z^{\sum_{b\in S}2^b}.
\]
Distinct subsets give distinct \(y\)-degrees.  For each fixed \(S\), binary
digits of the exponents record which variable was chosen from each factor
\(H^{2^b}\), so the fixed-\(S\) contribution has \(n^{|S|}\) monomials.
Therefore
\[
  \|g\|_0=\sum_{S\subseteq B} n^{|S|}=(n+1)^r.
\]
The individual degree of \(f\) is exactly \(q+1=d\), as witnessed by the
noncancelled diagonal monomials \(y^{q+1}x_i^{q+1}\).  Combining
\[
  s=(n+1)^2,\qquad \|g\|_0=(n+1)^r,\qquad d-1=2^{2r-1}
\]
It gives the displayed formula for \(\|g\|_0\). 

This example demonstrates that the root sparsity can indeed grow as \(s^{\Theta(\log d)}\), where the exponent depends only on the individual degree \(d\) and not on the number of variables \(n\). Consequently, any meaningful sparsity bound for exact roots must avoid dependence on \(n\) in the exponent of \(s\), which is precisely the guaranty provided by our theorem.

\subsection{Technical Overview}

Reverse-monicity is the central simplifying condition. If a polynomial $H$ is reverse-monic in a variable $x_j$, then $H|_{x_j=0}=1$. For exact roots, this makes the formal expression
\[
  R=H^{1/e}=(1+(H-1))^{1/e}
\]
truncate in bounded $z_j$-degree, leading to the bound $\|R\|_0\leq S^{\Delta/e+1}$ for $S$-sparse $H$ of individual degree at most $\Delta$ \cite[Lemma 5.3]{BishtVolkovich2025}.

For arbitrary exact powers, the scalar substitution
\[
  f(x_1y^{k_1},\ldots,x_ny^{k_n})
\]
has transformed $y$-degree at most $D(2d+2)$: a monomial $x_1^{e_1}\cdots x_n^{e_n}$ contributes $e_1k_1+\cdots+e_nk_n\leq D(2d+2)$. After a bounded weight vector $k\in\{1,\ldots,2d+2\}^n$ isolates a unique lowest monomial of $f$, we divide by the minimum weight and use
\[
  F^\sharp=y^{-k_f}f(x_1y^{k_1},\ldots,x_ny^{k_n}).
\]
This creates a $\{y\}$-reverse pseudo-monic exact power $F^\sharp=(G^\sharp)^e$ whose $y$-degree is controlled directly by the scalar estimate. A local constants-explicit pseudo-monic exact-root lemma then gives the exact exponent. The earlier constant-free reverse-monic normalization is not used with the $D(2d+2)$ parameter.

\subsection{Related Work}

\paragraph{Sparsity of Exact-root of Sparse Polynomial.} 
If $f=g^e$ for $e\in\mathbb{Z}_{>1}$, $g$ is called an exact-root of $f$.

\textit{The Erdős--Rényi conjecture on sparse powers.}
The study of sparsity behavior under polynomial powers originates from a conjecture of R\'enyi and independently Erd\H{o}s~\cite{Erdos1949}, who conjectured that a bound on the number of terms of $f(x)^2$ implies a bound on the number of terms of $f(x)$ itself. This conjecture was proved by Schinzel~\cite{Schinzel1987} for all powers $f(x)^l$, establishing an explicit lower bound $t \ge c_l \log \log T$, where $T$ and $t$ are the numbers of terms of $f$ and $f^l$, respectively. The bound was later sharpened by Schinzel and Zannier~\cite{SZ2009} to $t \ge 2 + \log(T-1)/\log(4l)$. 

\textit{Schinzel's lower bound for terms of powers.}
In his 1987 paper~\cite{Schinzel1987}, Schinzel proved the Erd\H{o}s--R\'enyi conjecture in full generality, showing that if $f(x) \in K[x]$ (with $\operatorname{char} K = 0$ or $\operatorname{char} K > l \deg f$) has $T$ terms and $f(x)^l$ has $t$ terms, then $t \ge c_l \log \log T$ for an explicit $c_l > 0$. The proof uses a delicate induction on the number of terms, Dirichlet approximation, and properties of Wronskians. This result is a qualitative statement about the unavoidable growth of term count when taking powers. In contrast, our Theorem~3.4 gives an upper bound $\|g\|_0 \le s^{D(2d+2)/e + 1}$ for the base $g$ when $f = g^e$, which is algorithmic in nature and directly enables the deterministic reconstruction of $g$. The combination of Schinzel's lower bounds and our upper bounds delineates the precise sparsity landscape for powers of sparse polynomials.

Schinzel and Zannier~\cite{SZ2009} established lower bounds on the number of terms of powers of polynomials: if $f$ has $T$ terms and $f^l$ has $t$ terms, then $t \ge 2 + \log(T-1)/\log(4l)$ (in characteristic zero or large characteristic). This result, improving earlier work of Schinzel~\cite{Schinzel1987}, is structurally complementary to our upper bound on the sparsity of the root. Together, these results give a more complete picture of the sparsity behavior of powers and roots of polynomials.

%\textbf{Perfect power detection and root computation in lacunary polynomials.}
Giesbrecht and Roche~\cite{GR2011} studied the problem of detecting and computing perfect powers of lacunary polynomials. Their work provides a Monte Carlo algorithm (coRP) for detecting whether $f = h^r$ over $\mathbb{Z}[x]$ and $\mathbb{F}_q[x]$ (for large characteristic), running in polynomial time in the lacunary size. For computing the root $h$, they give two approaches: an output-sensitive deterministic algorithm that depends on the sparsity of $h$, and a faster algorithm based on sparse Newton iteration that relies on a conjecture of Schinzel. Our work differs in several essential aspects: we provide a fully deterministic algorithm for exact-root computation; our complexity bound does not depend on the unknown sparsity of $g$; and we handle the characteristic-dividing case rigorously without conjectural assumptions. The sparsity bound we prove, $\|g\|_0 \le s^{D(2d+2)/e + 1}$, complements the conjectural sparsity structure discussed in their work.

Bisht and Volkovich~\cite{BishtVolkovich2025} further developed structural results for sparse polynomial factorization-related problems, introducing the notions of reverse-monic and pseudo-monic polynomials. Their Lemma~5.3 provides a sparsity bound for exact roots under a reverse-monic condition: if $H$ is $s$-sparse, reverse-monic, has an individual degree of at most $\Delta$, and $H = R^e$, then $\|R\|_0 \le s^{\Delta/e + 1}$. This lemma serves as a crucial technical ingredient in our proof. Our contribution extends their structural bound to arbitrary (not necessarily reverse-monic) exact powers via a scalar weighted substitution and supplies a complete deterministic algorithm with explicit complexity accounting, including the treatment of positive characteristic via true inverse Frobenius roots.

Our result (1) can be viewed as a multi-variate version of
%structural complement to
the Erd\H{o}s--R\'enyi conjecture. Our bound applies to the multivariate setting, where Schinzel's theorem is not directly available and holds over arbitrary fields $\mathbb{F}$ without restriction on the characteristic. A natural question is whether the dependence on the degree $d$ in our bound is necessary. Example~1.1 shows that it is: over $\mathbb{F}_2$, we construct a family $f = g^3$ with $\|f\|_0 = s = (n+1)^2$ and $\|g\|_0 = s^{(\log_2(d-1) + 1)/4}$, demonstrating that the root sparsity can indeed grow as $s^{\Theta(\log d)}$. Thus, for arbitrary fields, no sparsity bound for exact roots can be independent of $d$.

%Our work is structurally complementary to these results: while Schinzel and Zannier give lower bounds on the number of terms of a power in terms of the base, we prove an upper bound on the sparsity of the base $g$ in an exact power $f = g^e$ in terms of the sparsity of $f$. Together, these results provide a more complete picture of the sparsity trade-off between a polynomial and its powers.

\paragraph{Schinzel's conjecture on composite sparse polynomials}
%\textit{Schinzel's conjecture on composite lacunary polynomials.}
Schinzel  generalized the Erd\H{o}s--R\'enyi conjecture to arbitrary polynomial compositions: if $g(h(x))$ has a bounded number of terms for a fixed non-constant polynomial $g$, then $h(x)$ must also have an unbounded number of terms. This conjecture was fully resolved by Zannier~\cite{Zannier2008}, who proved that if $g(h(x))$ has at most $l$ terms, then the number of terms of $h$ is bounded by a function depending only on $l$. 
%Zannier's proof introduces a method based on Puiseux expansions and $S$-unit equations in function fields, which is fundamentally different from Schinzel's earlier approach for powers. 

Our exact-power problem $f = g^e$ can be viewed as the special case of composition where $g(x) = x^e$. However, our focus is algorithmic: we give a deterministic polynomial-time algorithm to compute $g$ given $f$ and $e$, whereas Zannier's work is primarily existential and qualitative, establishing finiteness and parametric descriptions rather than efficient root extraction.

\paragraph{Sparsity of factors of  sparse polynomials}
%
%\textit{Deterministic factorization of sparse polynomials.}
%The most closely related framework to our work is the deterministic factorization algorithm for sparse polynomials with bounded individual degree due to Bhargava, Saraf, and Volkovich~\cite{BSV2020}. 
Bhargava, Saraf, and Volkovich~\cite{BSV2020} established a general factor-sparsity bound: if an $s$-sparse polynomial has individual degree at most $d$, then every factor is sufficiently sparse to permit deterministic reconstruction. 
In particular, they show that if $f$ is an $s$-sparse polynomial in $n$ variables, with individual degrees of its variables bounded by $d$, then the sparsity of each factor of $f$ is bounded by $s^{O(d^2 \log n)}$.
However, this complexity is quasi-polynomial in the sparse parameters for fixed  $d$. 
In order to obtain complexity that is polynomial in the sparse parameters and $d$, Bisht and Volkovich~\cite{BishtVolkovich2025} proposed the following conjecture.

\textbf{Conjecture 1.3}~\cite{BishtVolkovich2025}.
There exists a universal constant $k \in \mathbb{N}$ such that for any $s, d \in \mathbb{N}$, any factor of an $s$-sparse polynomial with individual degree bounded by $d$ has at most $s^{d^k}$ terms.

We solve conjecture in two special cases: \eqref{eq-bd1} exact powers $f = g^e$ and \eqref{eq-bd2} oriented factorizations $f = g h$ with $h$ multilinear. In both cases, we obtain a sparsity bound on $g$ of the form $s^{O(Dd)}$, where $D = \operatorname{tdeg}(f)$ is the total degree of $f$.

\subsection{Paper Organization}

Section 2 gives the preliminary definitions. Section 3 defines the $z_j$-reverse-monic concept, provides a nontrivial example, and relates it to $I$-reverse pseudo-monicity. Section 4 states the transformation lemmas and the exact-power sparsity theorem while identifying the additional lemma needed for a stronger simple-substitution claim. Section 5 presents the multilinear-factor and cofactor bounds for factorizations $f=gh$. Section 6 states the exact-power algorithmic consequence. Section 7 outlines the deterministic factorization framework for the multilinear setting and explains the limitations behind simplified runtimes. Section 8 concludes.

\section{Preliminaries}

Throughout, $\F$ is a field and $\F[x_1,\ldots,x_n]$ is a polynomial ring with $n\geq 1$. 

\begin{definition}[Sparse polynomial and degrees]\label{def:sparse}
A polynomial $f\in \F[x_1,\ldots,x_n]$ is called $s$-sparse if it has at most $s$ monomials with nonzero coefficients. We write $\|f\|_0=|\supp(f)|$ for the number of nonzero terms of $f$. For a nonzero polynomial
\[
  f=\sum_{i=1}^{t} c_i x_1^{d_{i,1}}x_2^{d_{i,2}}\cdots x_n^{d_{i,n}},
\]
where $c_i\in \F^*$ and $t=\|f\|_0$, its total degree is
\[
  \tdeg(f)=\max_{1\leq i\leq t}\left\{\sum_{j=1}^n d_{i,j}\right\},
\]
and its individual degree is
\[
  \indeg(f)=\max_{1\leq i\leq t,\,1\leq j\leq n} d_{i,j}.
\]
We use $D=\tdeg(f)$ and $d=\indeg(f)$ when these parameters are part of the statement under discussion.
\end{definition}

\begin{definition}[$x_j$-reverse-monic]\label{def:xj}
Let $f\in \F[x_1,\ldots,x_n]$. For $j\in[n]$, the polynomial $f$ is $x_j$-reverse-monic if
\[
  f|_{x_j=0}=1.
\]
Equivalently, when $f$ is viewed as a polynomial in $x_j$ with coefficients in the remaining variables, its constant coefficient is $1$.
\end{definition}

\begin{definition}[$I$-reverse pseudo-monic]\label{def:pseudo}
Let $I\subseteq [n]$. A polynomial $f\in \F[x_1,\ldots,x_n]$ is $I$-reverse pseudo-monic if the specialization $f|_{x_i=0\,:\,i\in I}$ is a nonzero field element or a single monomial. This is the convention of Bisht--Volkovich \cite[Definition 2.1]{BishtVolkovich2025}.
\end{definition}

\begin{example}[Example of $x_j$-reverse-monic]\label{ex:xj}
Let
\[
  f(x_1,x_2,x_3)=1+x_2(x_1^2+x_1x_3+x_3^2)+x_2^2x_1x_3.
\]
Then
\[
  f|_{x_2=0}=1,
\]
so $f$ is $x_2$-reverse-monic. This example is not merely univariate in $x_2$: the positive $x_2$-degree part contains the mixed terms $x_1^2x_2$, $x_1x_2x_3$, $x_2x_3^2$, and $x_1x_2^2x_3$.
\end{example}

\begin{remark}[Relationship with $I$-reverse pseudo-monic]\label{rem:relationship}
Every $x_j$-reverse-monic polynomial is $\{j\}$-reverse monic and hence $\{j\}$-reverse pseudo-monic. The pseudo-monic condition is weaker: for instance,
\[
  f(x_1,x_2,x_3)=x_3^4+x_1x_3^2+x_2^2
\]
is $\{1,2\}$-reverse pseudo-monic because $f|_{x_1=x_2=0}=x_3^4$, a single monomial. It is not $\{1,2\}$-reverse monic unless that monomial is normalized to $1$. The source cofactor argument uses this distinction: pseudo-monic factors can be transformed into reverse-monic factors while preserving the relevant sparsities \cite[Lemma 6.16]{BishtVolkovich2025}.
\end{remark}

We use the one source theorem repeatedly. It is the reverse-monic exact-root estimate appearing in the supplied exact-root questions report and in Bisht--Volkovich \cite[Lemma 5.3]{BishtVolkovich2025}.

\begin{theorem}\cite[Lemma 5.3]{BishtVolkovich2025}\label{reversemo}
Let $f\in \mathbb{F}[x_1,\ldots,x_n]$ be an $s$-sparse polynomial of individual degree $\Delta$ which is $x_i$-reverse monic, for some $i\in[n]$. If $f=g^e$ for some polynomial $g\in \mathbb{F}[x_1,\ldots,x_n]$ and $e\in\mathbb{N}$, then $g$ is $s^{\Delta/e+1}$-sparse.
\end{theorem}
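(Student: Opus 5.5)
We prove this by expanding $g=f^{1/e}$ as a formal power series in $x_i$ and truncating it, since the $x_i$-degree of $g$ is small. Write $m=\lfloor \Delta/e\rfloor$.

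\textbf{Normalize $g$.} Set $g_0=g|_{x_i=0}$. Specializing $f=g^e$ at $x_i=0$ gives $g_0^e=f|_{x_i=0}=1$. A polynomial whose $e$-th power is $1$ must be a constant, so $g_0=c\in\F$ with $c^e=1$. Replacing $g$ by $c^{-1}g$ changes neither the sparsity nor the relation $f=g^e$. Hence we may assume $g|_{x_i=0}=1$. Since $\deg_{x_i}(g^e)=e\deg_{x_i} g$ and $\deg_{x_i} f\le\Delta$, we get $\deg_{x_i} g\le m$.

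\textbf{Case $\operatorname{char}(\F)\nmid e$.} Write $f=1+P$, where every monomial of $P$ is divisible by $x_i$ and $\|P\|_0\le s-1$. Consider the ring $A=\F[x_j:j\ne i][[x_i]]$. In $A$ the series
\[
  \tilde g=\sum_{k\ge0}\binom{1/e}{k}P^k
\]
is well defined, because $e$ is invertible and $x_i^k\mid P^k$. It satisfies $\tilde g^e=1+P$ and has constant term $1$ in $x_i$.

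We claim $g=\tilde g$. Indeed, $(g/\tilde g)^e=1$ in $A$ and $g/\tilde g\equiv1 \pmod{x_i}$. Write $g/\tilde g=1+u$ with $x_i\mid u$. Then
\[
  0=(1+u)^e-1=u\,(e+\text{terms divisible by }u).
\]
The second factor is a unit of $A$ because $e$ is invertible, so $u=0$.

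Now reduce modulo $x_i^{m+1}$. Since $\deg_{x_i} g\le m$, and $P^k\equiv 0 \pmod{x_i^{m+1}}$ for $k>m$, we obtain
\[
  g=\Bigl(\sum_{k=0}^{m}\binom{1/e}{k}P^k\Bigr)\bmod x_i^{m+1}.
\]
Truncation never enlarges the support. Every monomial of $\sum_{k\le m}\binom{1/e}{k}P^k$ is a product of a multiset of at most $m$ monomials of $P$. The number of such multisets is
\[
  \binom{(s-1)+m}{m}\le s^{m}\le s^{\Delta/e+1}.
\]
This gives the bound in this case.

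\textbf{Case $p=\operatorname{char}(\F)\mid e$.} Write $e=p^a e'$ with $p\nmid e'$, and set $h=g^{e'}$, so that $f=h^{p^a}$. The Frobenius map is an injective ring homomorphism. It sends each term $c\,x^\alpha$ of $h$ to $c^{p^a}x^{p^a\alpha}$, and distinct monomials go to distinct monomials, so no cancellation occurs. It follows that:
\begin{itemize}
  \item $\|h\|_0=\|f\|_0\le s$;
  \item $\indeg(h)=\indeg(f)/p^a\le\Delta/p^a$;
  \item $(h|_{x_i=0})^{p^a}=1$, so $h|_{x_i=0}-1$ has $p^a$-th power $0$, which forces $h|_{x_i=0}=1$.
\end{itemize}
Thus $h$ is $x_i$-reverse monic, and $h=g^{e'}$ with $p\nmid e'$. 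Applying the first case to $h$ gives
\[
  \|g\|_0\le s^{(\Delta/p^a)/e'+1}=s^{\Delta/e+1}.
\]

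The main obstacle is proving that the truncated binomial series is genuinely equal to $g$, i.e. the uniqueness of the normalized $e$-th root in $A$. That is exactly where invertibility of $e$ is needed, which is why the Frobenius reduction must be done first. The support count itself is routine.
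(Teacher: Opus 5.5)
Your proof is correct and follows the same route the paper uses in the reverse-monic case of Lemma~\ref{lem:pseudo-exact-root}, together with the uniqueness argument of Lemma~\ref{binomialTrun}: normalize the constant term of $g$ to $1$, expand $(1+P)^{1/e}$ in the distinguished variable, truncate at $\lfloor\Delta/e\rfloor$, count supports, and handle $\operatorname{char}(\F)\mid e$ by a Frobenius reduction. Your multiset count $\binom{s-1+m}{m}\le s^{m}$ is slightly sharper than the paper's $\sum_{j\le m}S^j$, and your reduction through $h=g^{e'}$ is a bit cleaner than taking roots of exponent vectors, but the argument is the same.
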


The corresponding promised algorithm computes $g$ in $\poly(s^{\Delta/e},n,\Delta)$ field operations \cite[Lemmas 5.3 and 5.8]{BishtVolkovich2025}.

\begin{lemma}[Binomial Truncated Root: Existence and Uniqueness]\label{binomialTrun}
Let $\F$ be a field, and let $e \ge 1$ be an integer invertible in $\F$ (i.e., $\operatorname{char}(\F) \nmid e$). Let $H \in \F[x_1,\ldots,x_n,y]$ with $H(x_1,\dots,x_n,0)=1$, and let $N \ge 0$ be an integer. In the quotient ring $A_N = \F[x_1,\ldots,x_n,y]/(y^{N+1})$, define $R_N = \sum_{i=0}^{N} \binom{1/e}{i} (H-1)^i \in A_N$, where the binomial coefficients $\binom{1/e}{i} = \frac{(1/e)(1/e-1)\cdots(1/e-i+1)}{i!}$ are well-defined in $\F$ since $e$ is invertible. Then $R_N$ is the \emph{unique} element in $A_N$ satisfying $R_N(x_1,\dots,x_n,0)=1$ and $R_N^e = H$ in $A_N$.
\end{lemma}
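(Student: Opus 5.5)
We prove existence and uniqueness separately. Everything takes place in $A_N$, where $H-1$ is divisible by $y$ because $H(x,0)=1$; hence $(H-1)^i=0$ in $A_N$ for $i>N$. So $R_N$ is just the image of the formal binomial series $\sum_{i\ge0}\binom{1/e}{i}T^i$ under $T\mapsto H-1$. The binomial coefficients make sense in $\F$: their denominators are products of $e$ and $i!$, but $\binom{1/e}{i}$ can be written as an integer divided by $e^{i}$. Indeed, $e^i i!\binom{1/e}{i}=\prod_{j<i}(1-je)$, and a standard argument shows $e^{2i}\binom{1/e}{i}\in\mathbb{Z}$. Alternatively, the series can be built directly in $\F[[T]]$, as below, so that no division by $i!$ is ever needed.

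\textbf{Existence.} I would work in the formal power series ring $\F[[T]]$ and show there is a series $B(T)=1+\sum_{i\ge1}b_iT^i$ with $B(T)^e=1+T$. The coefficients are found recursively. Comparing coefficients of $T^m$ gives $e\,b_m+(\text{polynomial in } b_1,\dots,b_{m-1})=[m=1]$, and since $e$ is invertible this determines each $b_m$ uniquely. Over $\mathbb{Q}$ these $b_i$ are exactly $\binom{1/e}{i}$ by the classical binomial series identity. The recursion has coefficients in $\mathbb{Z}[1/e]$, so the same identity holds in $\mathbb{Z}[1/e][[T]]$ and maps into $\F[[T]]$. This also shows that the paper's expression for $\binom{1/e}{i}$ is well defined in $\F$. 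Next, apply the ring homomorphism $\F[[T]]\to A_N$ sending $T\mapsto H-1$. It is well defined because $H-1$ is nilpotent in $A_N$, so all terms of degree above $N$ vanish. This gives $R_N^e=1+(H-1)=H$. Setting $y=0$ kills $H-1$, so $R_N(x,0)=1$.

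\textbf{Uniqueness.} This is the main point. Suppose $S\in A_N$ satisfies $S(x,0)=1$ and $S^e=H$. Since $R_N\equiv1\pmod y$, it is a unit in $A_N$, so we may set $U=S R_N^{-1}$. Then $U^e=1$ and $U\equiv1\pmod y$. Write $U=1+V$ with $V\in yA_N$. If $V\ne0$, let $y^k$ be the lowest power of $y$ appearing in $V$, with $1\le k\le N$, and write $V=y^kV_k+(\text{higher})$ with $V_k\ne0$ free of $y$. Expanding gives $(1+V)^e=1+eV+\binom e2V^2+\dots$. Every term $V^j$ with $j\ge2$ lies in $y^{2k}A_N$, so the $y^k$-coefficient of $(1+V)^e$ is $eV_k$. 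Since $e$ is invertible, $eV_k\ne0$, which contradicts $U^e=1$. Hence $V=0$ and $S=R_N$. The only delicate point is the invertibility of $e$ in the coefficient comparison. Everything else is routine.
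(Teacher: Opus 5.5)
Your proof is correct and follows essentially the same route as the paper. Existence comes from the formal binomial series truncated modulo $y^{N+1}$, and uniqueness from the fact that the lowest-order $y$-coefficient is multiplied by the unit $e$; you pass to the ratio $S R_N^{-1}$ and expand $(1+V)^e$, whereas the paper factors $P^e-Q^e=(P-Q)\sum_{j=0}^{e-1}P^{e-1-j}Q^{j}$ for two candidate roots $P,Q$ and observes that the second factor is $\equiv e \pmod{y}$. Your recursive argument that $\binom{1/e}{i}\in\mathbb{Z}[1/e]$ is a useful addition that the paper omits, since its formula with $i!$ in the denominator cannot literally be evaluated in $\F$ when $0<\operatorname{char}(\F)\le i$.
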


\begin{proof}

\textbf{Existence.} Let $T = H-1$. Since $H(x_1,\dots,x_n,0)=1$, we have $T \equiv 0 \pmod{y}$. Thus
\[
R_N = \sum_{i=0}^{N} \binom{1/e}{i} T^i \in A_N.
\]
We verify directly that $R_N^e = H$ in $A_N$. In the formal power series ring $\F[x_1,\ldots,x_n][[y]]$, the binomial series $(1+T)^{1/e} = \sum_{i=0}^{\infty} \binom{1/e}{i} T^i$ satisfies
\[
\left(\sum_{i=0}^{\infty} \binom{1/e}{i} T^i\right)^e = 1+T = H.
\]
Since $T$ is divisible by $y$, every term $T^i$ has $y$-degree at least $i$. Thus, modulo $y^{N+1}$, all terms with $i > N$ contribute nothing. Therefore,
\[
R_N^e
= \left(\sum_{i=0}^{N} \binom{1/e}{i} T^i\right)^e
\equiv \left(\sum_{i=0}^{\infty} \binom{1/e}{i} T^i\right)^e
= H
\pmod{y^{N+1}},
\]
i.e., $R_N^e = H$ in $A_N$. Moreover, $R_N(x_1,\dots,x_n,0) = \binom{1/e}{0} \cdot 1 = 1$. This proves existence.

\textbf{Uniqueness.} Suppose $U,V \in A_N$ both satisfy
\[
U(x_1,\dots,x_n,0)=V(x_1,\dots,x_n,0)=1, \qquad U^e = V^e.
\]
We prove that $U=V$.

Assume for contradiction that $U \ne V$. Since every element of $A_N$ has a unique representative of $y$-degree at most $N$, there exists a smallest integer $t$ with $0 \le t \le N$ such that the coefficients of $y^t$ in $U$ and $V$ differ. Since $U(x_1,\dots,x_n,0)=V(x_1,\dots,x_n,0)=1$, we have $t \ge 1$. Then $U - V = y^t W$, where $W \in A_N$ and $W$ is not divisible by $y$ (i.e., $W(x_1,\dots,x_n,0) \ne 0$).

On the other hand, since $U \equiv V \equiv 1 \pmod{y}$, we have
\[
U^{e-1} + U^{e-2}V + \cdots + V^{e-1} \equiv \underbrace{1 + \cdots + 1}_{e \text{ terms}} = e \pmod{y}.
\]
Because $e$ is invertible in $\F$, $e \not\equiv 0 \pmod{y}$, so the above sum is a unit in $A_N$.

Now compute the difference:
\[
U^e - V^e = (U-V)(U^{e-1} + U^{e-2}V + \cdots + V^{e-1}) = y^t W \cdot S,
\]
where $S = U^{e-1} + U^{e-2}V + \cdots + V^{e-1}$. Since $S(x_1,\dots,x_n,0)=e \ne 0$ and $W(x_1,\dots,x_n,0) \ne 0$, the product $W \cdot S$ is not divisible by $y$. Hence, $U^e - V^e$ has a nonzero coefficient at $y^t$ (with $t \le N$). In particular, $U^e - V^e \ne 0$ in $A_N$. This contradicts the assumption $U^e = V^e$. Therefore, $U=V$. Uniqueness is proved.
\end{proof}

\begin{corollary}[Complete Recovery in the Exact Power Case]\label{binomialExactPower}
Let $\F$ be a field, and let $e \ge 1$ be an integer invertible in $\F$ (i.e., $\operatorname{char}(\F) \nmid e$). Assume $H \in \F[x_1,\ldots,x_n,y]$ satisfies $H(x_1,\dots,x_n,0)=1$, and suppose there exists $R \in \F[x_1,\ldots,x_n,y]$ such that $H = R^e$ and $\tdeg_y(H) \le N$. Let $M = \lfloor N/e \rfloor$. Then
\[
R \equiv \sum_{i=0}^{M} \binom{1/e}{i} (H-1)^i \pmod{y^{M+1}}.
\]
In particular, since $\tdeg_y(R) \le M$, this congruence determines $R$ completely in $\F[x_1,\ldots,x_n,y]$.
\end{corollary}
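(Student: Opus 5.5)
The plan is to reduce Corollary~\ref{binomialExactPower} to the uniqueness half of Lemma~\ref{binomialTrun}, applied with truncation level $M$ in place of $N$. The work splits into three steps: pin down the $y$-degree of $R$, normalize $R$ so that $R(x,0)=1$, and then compare $R$ with the truncated binomial sum inside $A_M=\F[x_1,\ldots,x_n,y]/(y^{M+1})$.

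\textbf{Step 1: the $y$-degree of $R$.} Since $H=R^e\neq 0$, $R$ is nonzero. The ring $\F[x_1,\ldots,x_n][y]$ is a polynomial ring over an integral domain, so the $y$-degree is additive:
\[
\deg_y(H)=e\,\deg_y(R).
\]
The hypothesis $\deg_y H\le N$ then gives $\deg_y R\le N/e$. Because degrees are integers, this sharpens to $\deg_y R\le \lfloor N/e\rfloor=M$.

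\textbf{Step 2: the constant term of $R$.} Set $r_0=R(x,0)$. Specializing $y=0$ in $H=R^e$ gives $r_0^e=1$ in $\F[x_1,\ldots,x_n]$. Total degree is also additive in a domain, so $r_0$ is a constant $\zeta\in\F$ with $\zeta^e=1$.

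This is the one genuine subtlety, and I expect it to be the main obstacle. The statement implicitly normalizes $R(x,0)=1$, but an arbitrary root need not satisfy this. I would handle it by replacing $R$ with $\zeta^{-1}R$, which is again an $e$-th root of $H$ and has constant term $1$. Equivalently, I would prove the identity for the root normalized by $R(x,0)=1$ and state the result up to an $e$-th root of unity. The stated congruence literally holds for $R$ itself only when $\zeta=1$.

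\textbf{Step 3: uniqueness in $A_M$.} Now assume $R(x,0)=1$. Reducing $H=R^e$ modulo $y^{M+1}$ shows that the image of $R$ in $A_M$ has constant term $1$ and $e$-th power equal to $H$. Lemma~\ref{binomialTrun} with $N$ replaced by $M$ says that the element
\[
R_M=\sum_{i=0}^{M}\binom{1/e}{i}(H-1)^i
\]
is the unique element of $A_M$ with these two properties. Hence $R\equiv R_M \pmod{y^{M+1}}$.

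\textbf{Conclusion.} By Step~1, $R$ has no monomials of $y$-degree above $M$. So $R$ is exactly the reduced representative of $R_M$ obtained by discarding all terms of $y$-degree greater than $M$, and the congruence determines $R$ completely.
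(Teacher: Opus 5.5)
Your proposal is correct and follows the same route as the paper. You bound $\deg_y R\le M$ by additivity of $y$-degree, invoke the uniqueness half of Lemma~\ref{binomialTrun} in $A_M$, and then use the degree bound to recover $R$ exactly.

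Your Step~2 is more careful than the paper's proof, and it catches a real error there. The paper asserts $R(x_1,\dots,x_n,0)=1$ from $R(x_1,\dots,x_n,0)^e=1$ and invertibility of $e$, which does not follow. For example, over $\mathbb{Q}$ with $e=2$ and $H=(1+y)^2$, the root $R=-(1+y)$ violates the stated congruence. So the corollary as printed needs exactly the normalization $R(x_1,\dots,x_n,0)=1$ that you impose; without it, the conclusion holds only for $\zeta^{-1}R$.
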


\begin{proof}
Since $H = R^e$, we have $\tdeg_y(H) = e \cdot \tdeg_y(R)$, hence $\tdeg_y(R) = \tdeg_y(H)/e \le N/e$, so $\tdeg_y(R) \le \lfloor N/e \rfloor = M$.
 In the quotient ring $A_M = \F[x_1,\ldots,x_n,y]/(y^{M+1})$, define
\[
\widetilde{R} = \sum_{i=0}^{M} \binom{1/e}{i} (H-1)^i.
\]
The Lemma \ref{binomialTrun} guarantees that $\widetilde{R}$ is the unique element in $A_M$ satisfying $\widetilde{R}(x_1,\dots,x_n,0)=1$ and $\widetilde{R}^e = H$ in $A_M$. The polynomial $R$ itself satisfies $R(x_1,\dots,x_n,0)=1$ (because $R(x_1,\dots,x_n,0)^e = H(x_1,\dots,x_n,0)=1$, and $e$ is invertible in $\F$) and $R^e = H$ in $A_M$. Hence $\widetilde{R} = R$ in $A_M$, i.e.,
\[
R \equiv \sum_{i=0}^{M} \binom{1/e}{i} (H-1)^i \pmod{y^{M+1}}.
\]
Since $\tdeg_y(R) \le M$, this congruence determines all coefficients of $R$ completely.
\end{proof}

\section{\texorpdfstring{$I$-Reverse Pseudo-Monic Transformation and Sparsity Bound}{I-Reverse Pseudo-Monic Transformation and Sparsity Bound}}

A useful way to approach arbitrary exact powers is to separate a lowest term and transform the polynomial toward reverse or pseudo-monic form. For the exact-power theorem below we use the scalar weighted substitution directly. After a separating weight is chosen, division by the minimum weight of the support produces a $\{y\}$-reverse pseudo-monic exact power. 
This constructs the required pseudo-monic input for the exact-root lemma, and avoids the earlier approach of forcibly converting the polynomial into reverse-monic form before applying the estimate.

We first isolate a support point by a weight vector whose coordinates are
bounded only in terms of \(d\).

\begin{lemma}[Separating vector]\label{lem:separator}
Let \(d\ge0\), \(n\ge1\), and let
\[
  A\subseteq\{0,\ldots,d\}^n
\]
be finite and nonempty. Then there are \(\mathbf{a}_0\in A\) and
\[
  \mathbf{k}\in\{1,\ldots,2d+2\}^n
\]
such that
\[
  \langle \mathbf{a}_0,\mathbf{k}\rangle<\langle \mathbf{a},\mathbf{k}\rangle
\]
for all \(\mathbf{a}\in A\setminus\{\mathbf{a}_0\}\).
\end{lemma}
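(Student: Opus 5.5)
Our plan is to take the weight vector to be an explicit function of the chosen point, namely
\[
  k_i = 2d+2-(\mathbf{a}_0)_i \qquad (1\le i\le n),
\]
and to choose \(\mathbf{a}_0\) by minimizing a suitable quadratic potential over \(A\). Since \(0\le(\mathbf{a}_0)_i\le d\), every coordinate satisfies \(d+2\le k_i\le 2d+2\), so \(\mathbf{k}\in\{1,\ldots,2d+2\}^n\) automatically. The constant \(2d+2\) plays no special role in the separation itself. The main point is to pick \(\mathbf{a}_0\) so that this self-referential weight makes \(\mathbf{a}_0\) the unique minimizer. We achieve this with a quadratic potential, which avoids any dependence on \(n\) or \(|A|\).

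For \(\mathbf{b}\in A\), write \(|\mathbf{b}|=\sum_i b_i\) and \(\|\mathbf{b}\|^2=\sum_i b_i^2\). A direct expansion gives
\[
  \langle \mathbf{b},\mathbf{k}\rangle-\langle \mathbf{a}_0,\mathbf{k}\rangle
  =(2d+2)\bigl(|\mathbf{b}|-|\mathbf{a}_0|\bigr)-\langle \mathbf{b},\mathbf{a}_0\rangle+\|\mathbf{a}_0\|^2 .
\]
Define the potential
\[
  \varphi(\mathbf{a})=(2d+2)|\mathbf{a}|-\tfrac12\|\mathbf{a}\|^2 .
\]
Let \(\mathbf{a}_0\in A\) be any minimizer of \(\varphi\) over the finite nonempty set \(A\).

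Next we bound the cross term by the arithmetic--geometric mean inequality (equivalently \(\|\mathbf{b}-\mathbf{a}_0\|^2\ge0\)):
\[
  \langle \mathbf{b},\mathbf{a}_0\rangle\le\tfrac12\bigl(\|\mathbf{b}\|^2+\|\mathbf{a}_0\|^2\bigr).
\]
Substituting this into the expansion yields
\[
  \langle \mathbf{b},\mathbf{k}\rangle-\langle \mathbf{a}_0,\mathbf{k}\rangle\ \ge\ \varphi(\mathbf{b})-\varphi(\mathbf{a}_0)\ \ge\ 0 .
\]
The AM--GM step is an equality exactly when \(\|\mathbf{b}-\mathbf{a}_0\|^2=0\), that is, when \(\mathbf{b}=\mathbf{a}_0\). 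So for every \(\mathbf{b}\in A\setminus\{\mathbf{a}_0\}\) the first inequality is strict. This gives \(\langle \mathbf{a}_0,\mathbf{k}\rangle<\langle \mathbf{b},\mathbf{k}\rangle\), which is the required separation.

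The only real obstacle is finding this choice. Natural first attempts fail without it:
\begin{itemize}
\item Choosing \(\mathbf{a}_0\) of minimal \(\ell_1\)-norm and penalizing coordinates outside its support cannot separate points with equal support and equal coordinate sum, such as \((1,2)\) and \((2,1)\).
\item Lexicographic weights need coordinates exponential in \(n\).
\end{itemize}
With \(\mathbf{k}\) depending linearly on \(\mathbf{a}_0\), the comparison becomes a quadratic form, and uniqueness reduces to the strict convexity of \(\|\cdot\|^2\). After that, only the routine checks remain: the expansion identity and the coordinate range \(k_i\in[d+2,2d+2]\).
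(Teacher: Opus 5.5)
Your proof is correct and is essentially the paper's argument. The paper's potential $\Phi(\mathbf a)=\sum_i\phi(a_i)$ simplifies to $(2d+2)|\mathbf a|-\|\mathbf a\|^2$, its weight is $k_i=2d+2-2a_{0i}$, and its telescoping one-coordinate inequality amounts to the identity $\langle \mathbf a-\mathbf a_0,\mathbf k\rangle=\Phi(\mathbf a)-\Phi(\mathbf a_0)+\|\mathbf a-\mathbf a_0\|^2$; this is the same strict-concavity step you get via AM--GM, with your quadratic term scaled by $1/2$.
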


\begin{proof}
If \(d=0\), then \(A=\{(0,\ldots,0)\}\). Any \(\mathbf{a}_0\in A\) and
\(\mathbf{k}=(1,\ldots,1)\) work.

Assume \(d\ge1\). Define
\[
  \phi(0)=0,\qquad
  \phi(t)=\sum_{m=0}^{t-1}\bigl(2(d-m)+1\bigr)\quad (1\le t\le d).
\]
Then
\[
  \phi(r+1)-\phi(r)=2(d-r)+1\qquad (0\le r\le d-1).
\]
For \(t,v\in\{0,\ldots,d\}\) with \(v\ne t\), this implies
\[
  \phi(v)-\phi(t)<2(d-t+1)(v-t).
\]
Indeed, if \(v>t\), telescope from \(t\) to \(v\); each summand is strictly
less than \(2(d-t+1)\). If \(v<t\), telescope from \(v\) to \(t\); each
summand is strictly greater than \(2(d-t+1)\), and multiply the resulting
inequality by \(-1\).

For \(\mathbf{a}=(a_1,\ldots,a_n)\), put
\[
  \Phi(\mathbf{a})=\sum_{i=1}^n\phi(a_i).
\]
Choose \(\mathbf{a}_0=(a_{01},\ldots,a_{0n})\in A\) minimizing \(\Phi\). Define
\[
  k_i=2(d-a_{0i}+1)\qquad (1\le i\le n).
\]
Then \(2\le k_i\le2d+2\), so \(\mathbf{k}\in\{1,\ldots,2d+2\}^n\). If
\(\mathbf{a}\in A\setminus\{\mathbf{a}_0\}\), let \(S=\{i:a_i\ne a_{0i}\}\). Summing the
one-coordinate inequality over the nonempty set \(S\) gives
\[
  \Phi(\mathbf{a})-\Phi(\mathbf{a}_0)
  <\sum_{i\in S}k_i(a_i-a_{0i})
  =\langle \mathbf{a},\mathbf{k}\rangle-\langle \mathbf{a}_0,\mathbf{k}\rangle.
\]
Since \(\mathbf{a}_0\) minimizes \(\Phi\), the left-hand side is nonnegative. Hence
\[
  \langle \mathbf{a}_0,\mathbf{k}\rangle<\langle \mathbf{a},\mathbf{k}\rangle,
\]
as required.
\end{proof}

\begin{theorem}[Scalar pseudo-monic exact-power transform]\label{thm:unified_sep}
Let $\mathbb F$ be a field, let $f,g\in \mathbb F[x_1,\ldots,x_n]\setminus \mathbb F$, let $e\in \mathbb N$ with $e\ge 1$, and suppose $f=g^e$. Assume
\[
1\le \operatorname{indeg}(f)\le d,\qquad \tdeg(f)\le D.
\]
Then there are $\mathbf{a}_0\in \operatorname{supp}(f)$ and a vector
\[
\mathbf{k}=(k_1,\ldots,k_n)\in\{1,\ldots,2d+2\}^n
\]
such that $\mathbf{a}_0$ is the unique lowest element of $\operatorname{supp}(f)$ for the weight $\langle \mathbf{a},\mathbf{k}\rangle$. If $c_0\mathbf{x}^{\mathbf{a}_0}$ is this lowest monomial of $f$, then the lowest $\mathbf{k}$-weight part of $g$ is a single monomial $\gamma \mathbf{x}^{\mathbf{b}_0}$ with $\gamma\in \mathbb F^*$ and
\[
\mathbf{a}_0=e\mathbf{b}_0,\qquad c_0=\gamma^e,\qquad \langle \mathbf{a}_0,\mathbf{k}\rangle=e\langle \mathbf{b}_0,\mathbf{k}\rangle.
\]
Writing
\[
k_f=\langle \mathbf{a}_0,\mathbf{k}\rangle,\qquad k_g=\langle \mathbf{b}_0,\mathbf{k}\rangle,
\]
define
\[
F^\sharp=y^{-k_f}f(x_1y^{k_1},\ldots,x_ny^{k_n}),\qquad
G^\sharp=y^{-k_g}g(x_1y^{k_1},\ldots,x_ny^{k_n}).
\]
Then $F^\sharp,G^\sharp\in \mathbb F[x_1,\ldots,x_n,y]$,
\[
F^\sharp=(G^\sharp)^e,\qquad F^\sharp|_{y=0}=c_0\mathbf{x}^{\mathbf{a}_0},
\]
and
\[
\|F^\sharp\|_0=\|f\|_0,\qquad \|G^\sharp\|_0=\|g\|_0.
\]
In particular, $F^\sharp$ is $\{y\}$-reverse pseudo-monic. Moreover, we have the degree bound
\[
\operatorname{indeg}(F^\sharp)\le D(2d+2).
\]

\end{theorem}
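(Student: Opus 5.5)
First apply Lemma~\ref{lem:separator} to $A=\operatorname{supp}(f)\subseteq\{0,\ldots,d\}^n$. This is legitimate because $\operatorname{indeg}(f)\le d$. The lemma produces $\mathbf{a}_0\in\operatorname{supp}(f)$ and $\mathbf{k}\in\{1,\ldots,2d+2\}^n$ such that $\mathbf{a}_0$ is the unique $\mathbf{k}$-weight minimizer. Next, grade $\mathbb F[x_1,\ldots,x_n]$ by $\mathbf{k}$-weight. The key observation is that the lowest-weight homogeneous part of a product is the product of the lowest-weight parts. This holds because the ring is a domain, so the product of the two nonzero lowest parts is nonzero and of minimal weight. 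Writing $\mathrm{in}_{\mathbf k}(\cdot)$ for the lowest-weight part, we get $\mathrm{in}_{\mathbf k}(f)=\mathrm{in}_{\mathbf k}(g)^e$. By the choice of $\mathbf{k}$, $\mathrm{in}_{\mathbf k}(f)=c_0\mathbf{x}^{\mathbf{a}_0}$ is a single monomial.

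The main step is to conclude that $\mathrm{in}_{\mathbf k}(g)$ is itself a monomial. I would argue this with a second, refining term order. Fix any monomial order $\prec$, for example lex. Then the $\prec$-largest and $\prec$-smallest monomials of $h^e$ are $e$ times those of $h$, with coefficients equal to the $e$-th powers of the corresponding coefficients of $h$, hence nonzero. If $h=\mathrm{in}_{\mathbf k}(g)$ had two distinct monomials, then $h^e$ would have distinct largest and smallest monomials, contradicting $h^e=c_0\mathbf{x}^{\mathbf{a}_0}$. So $h=\gamma\mathbf{x}^{\mathbf{b}_0}$ with $\gamma^e=c_0$, $e\mathbf{b}_0=\mathbf{a}_0$, and therefore $k_f=ek_g$. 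This argument is characteristic-free, since no binomial coefficients appear.

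For the transformed polynomials, the substitution $x_i\mapsto x_iy^{k_i}$ sends $\mathbf{x}^{\mathbf a}$ to $\mathbf{x}^{\mathbf a}y^{\langle\mathbf a,\mathbf k\rangle}$. This map is injective on monomials, so supports correspond bijectively and $\|F^\sharp\|_0=\|f\|_0$, $\|G^\sharp\|_0=\|g\|_0$. Every monomial of $f$ has weight at least $k_f$, and every monomial of $g$ has weight at least $k_g$, so dividing by the corresponding power of $y$ leaves genuine polynomials. The identity $F^\sharp=(G^\sharp)^e$ follows from $f=g^e$ together with $y^{-k_f}=(y^{-k_g})^e$. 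Setting $y=0$ keeps exactly the weight-$k_f$ terms, and that is the single term $c_0\mathbf{x}^{\mathbf{a}_0}$; this gives the $\{y\}$-reverse pseudo-monic property.

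Finally, for the degree bound: each $x_i$-degree in $F^\sharp$ is at most $d\le D(2d+2)$. The $y$-degree of the image of $\mathbf{x}^{\mathbf a}$ is $\langle\mathbf a,\mathbf k\rangle-k_f\le(2d+2)\sum_ia_i\le(2d+2)D$. I expect the monomiality of $\mathrm{in}_{\mathbf k}(g)$ to be the only nontrivial point; the rest is bookkeeping.
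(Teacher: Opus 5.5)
Your proposal is correct and follows the paper's proof essentially step by step: the separating vector from Lemma~\ref{lem:separator}, the single lowest-weight monomial of $g$ giving $\mathbf{a}_0=e\mathbf{b}_0$ and $k_f=ek_g$, support preservation via the injective exponent map, the ring-homomorphism identity $F^\sharp=(G^\sharp)^e$, and the degree bound from $\langle\mathbf a,\mathbf k\rangle-k_f\le(2d+2)D$ together with unchanged $x_i$-degrees. The only difference is that you spell out the ``lowest-root-monomial'' step via $\mathrm{in}_{\mathbf k}(f)=\mathrm{in}_{\mathbf k}(g)^e$ and a monomial-order argument; the paper merely invokes that step here but carries it out the same way in its proof of Theorem~\ref{thm:multi}. (Your inequality $d\le D(2d+2)$ uses $D\ge 1$, which holds because $f$ is nonconstant.)
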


\begin{proof}
The separating vector and the unique lowest monomial $\mathbf{a}_0$ are supplied by Lemma \ref{lem:separator}. The lowest-root-monomial argument applied to $f=g^e$ gives the displayed monomial $\gamma \mathbf{x}^{\mathbf{b}_0}$ of $g$ and the identities
\[
\mathbf{a}_0=e\mathbf{b}_0,\qquad c_0=\gamma^e,\qquad k_f=e k_g.
\]

For a monomial $c_{\mathbf{a}}\mathbf{x}^{\mathbf{a}}$ of $f$, its image in $F^\sharp$ is
\[
c_{\mathbf{a}}\mathbf{x}^{\mathbf{a}} y^{\langle \mathbf{a},\mathbf{k}\rangle-k_f}.
\]
The definition of $k_f$ as the minimum $\mathbf{k}$-weight on $\operatorname{supp}(f)$ makes every exponent $\langle \mathbf{a},\mathbf{k}\rangle-k_f$ nonnegative, so $F^\sharp$ is an ordinary polynomial. The same argument, using the minimum weight $k_g$ on $\operatorname{supp}(g)$, shows that $G^\sharp\in \mathbb F[x_1,\ldots,x_n,y]$.

The scalar substitution $p\mapsto p(x_1y^{k_1},\ldots,x_ny^{k_n})$ is a ring homomorphism. Since $f=g^e$ and $k_f=e k_g$,
\[
F^\sharp
=y^{-k_f}f(x_1y^{k_1},\ldots,x_ny^{k_n})
=y^{-e k_g}g(x_1y^{k_1},\ldots,x_ny^{k_n})^e
=(G^\sharp)^e.
\]
The specialization at $y=0$ keeps exactly the terms of minimum $\mathbf{k}$-weight. The minimum in $f$ is unique, so
\[
F^\sharp|_{y=0}=c_0\mathbf{x}^{\mathbf{a}_0},
\]
a single nonzero monomial. Thus $F^\sharp$ is $\{y\}$-reverse pseudo-monic.

Finally, the exponent map
\[
\mathbf{a}\longmapsto (\mathbf{a},\langle \mathbf{a},\mathbf{k}\rangle-k_f)
\]
from monomials of $f$ to monomials of $F^\sharp$ is injective because its first $n$ coordinates recover $\mathbf{a}$. No nonzero coefficient is killed, so $\|F^\sharp\|_0=\|f\|_0$. The identical argument for
\[
\mathbf{b}\longmapsto (\mathbf{b},\langle \mathbf{b},\mathbf{k}\rangle-k_g)
\]
gives $\|G^\sharp\|_0=\|g\|_0$.

It remains to prove the degree bound. Let
\[
H(x_1,\ldots,x_n,y)=f(x_1y^{k_1},\ldots,x_ny^{k_n}).
\]
For any monomial $\mathbf{x}^{\mathbf{a}}=x_1^{a_1}\cdots x_n^{a_n}$ of $f$, its image under the substitution $x_j\mapsto x_jy^{k_j}$ is
\[
x_1^{a_1}\cdots x_n^{a_n}\, y^{a_1k_1+\cdots+a_nk_n}.
\]
Since $k_j\le 2d+2$ for every $j$,
\[
a_1k_1+\cdots+a_nk_n
\le (a_1+\cdots+a_n)(2d+2)
\le D(2d+2),
\]
because the total degree of every monomial in $f$ is at most $D=\tdeg(f)$. Hence
\[
\tdeg_y(H)\le D(2d+2).
\]
Since $F^\sharp=y^{-k_f}H$ and $k_f\ge 0$, we have
\[
\tdeg_y(F^\sharp)\le \tdeg_y(H)\le D(2d+2).
\]

Moreover, for each variable $x_i$ ($1\le i\le n$), the substitution $x_j\mapsto x_jy^{k_j}$ does not change the exponent of $x_i$ in any monomial, and the subsequent division by $y^{-k_f}$ does not affect the $x_i$-exponents either. Hence
\[
\tdeg_{x_i}(F^\sharp)=\tdeg_{x_i}(f)\qquad\text{for every }1\le i\le n.
\]
In other words, the individual degrees in the original variables $x_1,\ldots,x_n$ remain unchanged under the construction of $F^\sharp$. Thus $\operatorname{indeg}(F^\sharp)\le D(2d+2)$, as claimed.
\end{proof}

The following lemma is a local version of the pseudo-monic exact-root estimate. While Bisht--Volkovich \cite[Theorem 6.19]{BishtVolkovich2025} gives an asymptotic exponent for general $I$-reverse pseudo-monic inputs, we need the exact exponent stated below. It is obtained by combining the reverse-monic binomial expansion argument \cite[Lemma 5.3]{BishtVolkovich2025} with the pseudo-monic normalization technique \cite[Lemma 6.16 and Remark 6.18]{BishtVolkovich2025}.

\begin{lemma}[Local pseudo-monic exact-root bound]\label{lem:pseudo-exact-root}
Let $H,R\in \F[x_1,\ldots,x_n,y]$ satisfy $H=R^e$ with $e\geq 1$. Suppose that $H$ is $S$-sparse, that
\[
  H(x_1,\ldots,x_n,0)
\]
is a nonzero field element or a single nonzero monomial, and that $\tdeg_y(H)\leq \Delta$. Then
\[
  \|R\|_0\leq S^{\Delta/e+1}.
\]
\end{lemma}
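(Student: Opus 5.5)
The plan is to reduce to the reverse-monic binomial expansion of Lemma~\ref{binomialTrun} and Corollary~\ref{binomialExactPower}, after normalizing away the lowest monomial, and to handle the case $\operatorname{char}(\F)\mid e$ by a Frobenius descent. The case $e=1$ is trivial, since then $R=H$, so assume $e\ge 2$.

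\textbf{Step 1 (normalization).} Specializing $y=0$ in $H=R^e$ gives $R(\mathbf{x},0)^e=c\,\mathbf{x}^{\mathbf{a}}$ with $c\in\F^*$. Since $\F[x_1,\ldots,x_n]$ is a UFD whose units are the nonzero constants, a polynomial whose $e$-th power is a nonzero monomial is itself a nonzero monomial. Hence $R(\mathbf{x},0)=\gamma\mathbf{x}^{\mathbf{b}}$ with $\gamma^e=c$ and $e\mathbf{b}=\mathbf{a}$. Work in the Laurent ring $\F[x_1^{\pm1},\ldots,x_n^{\pm1}][y]$ and set
\[
H'=H/(c\mathbf{x}^{\mathbf{a}}),\qquad R'=R/(\gamma\mathbf{x}^{\mathbf{b}}).
\]
Then $H'=(R')^e$, $H'|_{y=0}=1$, $\|H'\|_0=S$, $\|R'\|_0=\|R\|_0$, $\tdeg_y(H')\le\Delta$, and $\tdeg_y(R')=\tdeg_y(H')/e\le\Delta/e$. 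The proofs of Lemma~\ref{binomialTrun} and Corollary~\ref{binomialExactPower} use only the $y$-adic structure over the coefficient ring, so they apply verbatim with coefficients in this Laurent ring.

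\textbf{Step 2 ($\operatorname{char}(\F)\nmid e$).} Write $T=H'-1$, so that $\|T\|_0\le S$ and $y\mid T$. Set $M=\lfloor\Delta/e\rfloor$. By Corollary~\ref{binomialExactPower}, $R'$ is the truncation to $y$-degree at most $M$ of $\sum_{i=0}^{M}\binom{1/e}{i}T^i$. Every monomial of this sum is a product of at most $M$ monomials of $T$, so counting products (rather than multisets) gives
\[
\|R'\|_0\le\sum_{i=0}^{M}\|T\|_0^{\,i}\le\sum_{i=0}^{M}S^i\le S^{M+1}.
\]
A slightly sharper count using $\|T\|_0\le S-1$ yields $S^{M}$, but $S^{M+1}\le S^{\Delta/e+1}$ already suffices. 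Multiplying back by $\gamma\mathbf{x}^{\mathbf{b}}$ does not change the sparsity.

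\textbf{Step 3 ($p=\operatorname{char}(\F)\mid e$).} Write $e=p^r e'$ with $r\ge 1$ and $p\nmid e'$, and set $Q=R^{p^r}$. By the Frobenius identity, if $R=\sum_m c_m m$ then $Q=\sum_m c_m^{p^r}m^{p^r}$. Distinct monomials stay distinct and nonzero coefficients stay nonzero, so $\|Q\|_0=\|R\|_0$, and every exponent (including the $y$-exponent) of $Q$, and hence of $H=Q^{e'}$, is divisible by $p^r$. Let $\widetilde{H}$ and $\widetilde{Q}$ be obtained by dividing all exponents by $p^r$. Because this exponent division is a ring isomorphism onto its image, $\widetilde{H}=\widetilde{Q}^{e'}$. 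Moreover $\|\widetilde{H}\|_0=S$, $\widetilde{H}|_{y=0}$ is still a nonzero constant or monomial, $\tdeg_y(\widetilde{H})\le\Delta/p^r$, and $\|\widetilde{Q}\|_0=\|R\|_0$. If $e'=1$, then $\|R\|_0=\|\widetilde{H}\|_0=S$. Otherwise, Steps~1--2 applied with exponent $e'$ give
\[
\|R\|_0\le S^{(\Delta/p^r)/e'+1}=S^{\Delta/e+1}.
\]

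\textbf{Main obstacle.} The delicate point is Step~3. A naive reduction that only extracts the coprime part $e'$ would give the weaker exponent $\Delta/e'$. The exponent-division trick is what recovers the full $\Delta/e$, and the key check is that it preserves both sparsity and the pseudo-monic specialization at $y=0$. A lesser point is confirming that the binomial uniqueness argument transfers to Laurent coefficients; it does, since it only uses that $e$ is a unit and the $y$-adic filtration.
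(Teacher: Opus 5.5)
Your proof is correct and follows the paper's argument. You normalize the initial monomial to $1$, truncate the binomial series at $\lfloor\Delta/e\rfloor$ and count products of terms, and handle $p\mid e$ by dividing all exponents by $p^r$; your $\widetilde{Q}$ plays the role of the paper's $R_0$. The only deviation is that you normalize by dividing in the Laurent ring $\F[x_1^{\pm1},\ldots,x_n^{\pm1}][y]$, whereas the paper uses the substitution $y\mapsto y\alpha$ to stay in the polynomial ring. Also, your displayed chain $\sum_{i=0}^{M}S^i\le S^{M+1}$ fails when $S=1$ and $M\ge 1$. Rely instead on your sharper count $\sum_{i=0}^{M}(S-1)^i\le S^{M}$, or treat $S=1$ separately as the paper does.
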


\begin{proof}

First consider the reverse-monic case $H(x_1,\dots,x_n,0)=1$. Substituting $y=0$ in $H=R^e$ shows that $R(x_1,\dots,x_n,0)^e=1$, so $R(x_1,\dots,x_n,0)$ is a nonzero constant. Multiplying $R$ by this constant's inverse does not change support; after this scalar normalization we may count support with $R(x,0)=1$.

If $\operatorname{char}(\F)$ is zero or does not divide $e$, the binomial expansion
\[
  R=H^{1/e}=(1+(H-1))^{1/e}
\]
is valid modulo powers of the ideal $(y)$. Every monomial of $H-1$ has positive $y$-degree, while $\tdeg_y(R)=\tdeg_y(H)/e\leq \Delta/e$. Hence terms of $(H-1)^j$ with $j>\Delta/e$ cannot contribute to $R$, and
\[
  R\equiv
  \sum_{j=0}^{\lfloor \Delta/e\rfloor}\binom{1/e}{j}(H-1)^j
  \pmod{(y)^{\lfloor \Delta/e\rfloor+1}}.
\]
Passing to this quotient cannot increase support. If $S=1$ the claim is immediate; otherwise $\|H-1\|_0\leq S$ and the finite sum has support at most
\[
  \sum_{j=0}^{\lfloor \Delta/e\rfloor}S^j\leq S^{\Delta/e+1}.
\]

If $\operatorname{char}(\F)=p>0$ and $p$ divides $e$, write $e=p^kq$ with $p\nmid q$. The $p^k$-th power map sends monomial exponents to $p^k$ times those exponents and does not change support. Taking the unique $p^k$-th root of all exponent vectors occurring in $H$ gives a polynomial $H_0$ with $\|H_0\|_0=\|H\|_0$, $\tdeg_y(H_0)=\tdeg_y(H)/p^k$, and $H_0=R_0^q$ for a polynomial $R_0$ with $\|R_0\|_0=\|R\|_0$. The preceding case applied to $H_0=R_0^q$ gives
\[
  \|R\|_0=\|R_0\|_0\leq S^{(\Delta/p^k)/q+1}=S^{\Delta/e+1}.
\]

It remains to reduce the pseudo-monic case to the reverse-monic case. Put
\[
  \alpha=H(x_1,\dots,x_n,0).
\]
By hypothesis $\alpha$ is a nonzero scalar or a single nonzero monomial. Since $H=R^e$, the specialization $\beta=R(x_1,\dots,x_n,0)$ satisfies $\beta^e=\alpha$. A polynomial whose $e$-th power is a nonzero scalar multiple of a monomial is itself a nonzero scalar multiple of a monomial, so $\beta$ has the same pseudo-monomial form.

Define
\[
  \Psi(P)=P(x_1,\ldots,x_n,y\alpha),\qquad
  H^*=\alpha^{-1}\Psi(H),\qquad R^*=\beta^{-1}\Psi(R).
\]
For a term of $H$ with $y$-degree $j\geq 1$, the substitution contributes the monomial factor $\alpha^j$, and division by $\alpha$ leaves $\alpha^{j-1}$; the unique $y$-degree zero term becomes $1$. Thus $H^*$ is a polynomial. The same argument with $\beta$ shows that $R^*$ is a polynomial. Moreover,
\[
  H^*=\alpha^{-1}\Psi(R^e)=\beta^{-e}\Psi(R)^e=(R^*)^e
\]
and $H^*(x_1,\dots,x_n,0)=1$.

The transformation preserves the exponent of $y$ in every monomial, so $\tdeg_y(H^*)=\tdeg_y(H)\leq\Delta$. It also preserves support. Indeed, for fixed $y$-degree $j$, the exponent of the $x$-part is translated by a fixed vector, and the $y$-degree itself is retained; hence distinct monomials remain distinct. Therefore
\[
  \|H^*\|_0=\|H\|_0\leq S,\qquad \|R^*\|_0=\|R\|_0.
\]
Applying the reverse-monic case to $H^*=(R^*)^e$ yields
\[
  \|R\|_0=\|R^*\|_0\leq S^{\Delta/e+1}.
\]

\end{proof}

\begin{theorem}[\texorpdfstring{Sparsity bound for root in $f=g^e$}{Sparsity bound for transformed/root in f=g^e}]\label{thm:exact}
Let $\F$ be a field, let $s,d\in\mathbb N$, let $e \ge 1$ be an integer, and let
$f,g\in \F[x_1,\ldots,x_n]$ satisfy
\[
  f=g^e,\qquad \|f\|_0\leq s,\qquad \indeg(f)\leq d,\qquad \tdeg(f)\leq D
\]
Then
\[
  \|g\|_0\leq s^{D(2d+2)/e+1}.
\]
\end{theorem}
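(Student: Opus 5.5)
The plan is to combine Theorem~\ref{thm:unified_sep} with Lemma~\ref{lem:pseudo-exact-root}, after first disposing of the degenerate cases that Theorem~\ref{thm:unified_sep} excludes.

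\emph{Degenerate cases.}
\begin{itemize}
\item If $f=0$, then $g=0$ and the bound is trivial.
\item If $f$ is a nonzero constant, then $g^e$ is constant, so $g$ is constant and $\|g\|_0=1\le s^{D(2d+2)/e+1}$ (note $s\ge 1$).
\item If $e=1$, then $g=f$ and $\|g\|_0=s\le s^{D(2d+2)+1}$.
\end{itemize}
Otherwise $f$ is nonconstant. Then $g$ is nonconstant, and $\indeg(f)\ge 1$, so $1\le \indeg(f)\le d$ and $\tdeg(f)\le D$. These are exactly the hypotheses of Theorem~\ref{thm:unified_sep}.

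\emph{Main step.} Applying Theorem~\ref{thm:unified_sep} produces a separating weight $\mathbf{k}\in\{1,\ldots,2d+2\}^n$ and the transformed pair
\[
F^\sharp=y^{-k_f}f(x_1y^{k_1},\ldots,x_ny^{k_n}),\qquad G^\sharp=y^{-k_g}g(x_1y^{k_1},\ldots,x_ny^{k_n}).
\]
Both lie in $\F[x_1,\ldots,x_n,y]$, and they satisfy $F^\sharp=(G^\sharp)^e$, $\|F^\sharp\|_0=\|f\|_0\le s$, and $\|G^\sharp\|_0=\|g\|_0$. Moreover $F^\sharp|_{y=0}=c_0\mathbf{x}^{\mathbf{a}_0}$ is a single nonzero monomial.

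The degree estimate inside the proof of Theorem~\ref{thm:unified_sep} gives $\tdeg_y(F^\sharp)\le D(2d+2)$. This $y$-degree bound is what Lemma~\ref{lem:pseudo-exact-root} actually needs; the theorem's stated conclusion on $\indeg(F^\sharp)$ is a weaker packaging of it. I would therefore cite the $y$-degree bound explicitly.

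Lemma~\ref{lem:pseudo-exact-root} now applies with $H=F^\sharp$, $R=G^\sharp$, $S=s$, and $\Delta=D(2d+2)$. Since $H(x,0)$ is a single nonzero monomial, the lemma's hypothesis is met, and it yields
\[
\|g\|_0=\|G^\sharp\|_0\le s^{D(2d+2)/e+1}.
\]

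\emph{Expected obstacle.} There is no real difficulty here; all the work sits in the earlier results. The two points to check carefully are:
\begin{itemize}
\item that Lemma~\ref{lem:pseudo-exact-root} is invoked with the $y$-degree bound rather than the individual-degree statement, and
\item that the constant and $e=1$ cases are handled separately, because Theorem~\ref{thm:unified_sep} requires $f,g\notin\F$.
\end{itemize}
The bound in Lemma~\ref{lem:pseudo-exact-root} holds in every characteristic, including when $\operatorname{char}(\F)\mid e$, so no case split on the characteristic is needed.
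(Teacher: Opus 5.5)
Your proposal is correct and follows the paper's proof essentially step for step: dispose of $f\in\F$, apply Theorem~\ref{thm:unified_sep} to get the $\{y\}$-reverse pseudo-monic pair $F^\sharp=(G^\sharp)^e$ with supports preserved and $\tdeg_y(F^\sharp)\le D(2d+2)$, then invoke Lemma~\ref{lem:pseudo-exact-root} with $H=F^\sharp$, $R=G^\sharp$, $S=s$, $\Delta=D(2d+2)$. Two minor points: the separate $e=1$ case is unnecessary, since Theorem~\ref{thm:unified_sep} already allows $e\ge 1$; and because $y$ is one of the variables of $F^\sharp$, the stated bound on $\indeg(F^\sharp)$ already implies the $y$-degree bound rather than being a weaker form of it.
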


\begin{proof}
If $f\in \F$, the bound is trivial. Assume $f\notin \F$, then $f$ is nonzero and nonconstant. Hence
$\supp(f)$ is finite and nonempty, and the hypothesis $\indeg(f)\leq d$
forces $d\geq 1$. Apply Theorem~\ref{thm:unified_sep} and use its notation. It gives ordinary polynomials $F^\sharp,G^\sharp\in \F[x_1,\ldots,x_n,y]$ such that
\[
  F^\sharp=(G^\sharp)^e,\qquad
  F^\sharp|_{y=0}=c_0x^{a_0},\qquad
  \|F^\sharp\|_0=\|f\|_0\leq s,\qquad
  \|G^\sharp\|_0=\|g\|_0.
\]
Thus $F^\sharp$ is $\{y\}$-reverse pseudo-monic and
\[
  \tdeg_y(F^\sharp)\leq D(2d+2).
\]
Lemma~\ref{lem:pseudo-exact-root}, applied with
\[
  H=F^\sharp,\qquad R=G^\sharp,\qquad S=s,\qquad \Delta=D(2d+2),
\]
first gives the sharper estimate
\[
  \|G^\sharp\|_0\leq s^{D(2d+2)/e+1}.
\]
This implies
\[
  \|g\|_0=\|G^\sharp\|_0\leq s^{D(2d+2)/e+1},
\]
which is the stated bound.
\end{proof}

\section{\texorpdfstring{Sparsity Bound for $f=gh$ with $h$ Multilinear}{Sparsity Bound for f=gh with h Multilinear}}

We now turn to multilinear cofactors. The orientation matters: the named factor $h$ is the multilinear divisor in $f=gh$.

\begin{corollary}[$I$-reverse pseudo-monic cofactor result]\label{cor:pseudo}
Let $f=gh$ and suppose that $h$ is $I$-reverse pseudo-monic. Suppose also that the individual degrees of the variables of $g$ in $x_I$ are at most $d$, and that $f$ and $h$ are both bounded by the same sparsity parameter $s$ in the setup of the source theorem. Then
\[
  \|g\|_0\leq s^{d|I|+2}.
\]
\end{corollary}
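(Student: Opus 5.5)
Write $h|_{x_i=0:\,i\in I}=\lambda\mathbf{x}^{\mathbf{m}}$ with $\lambda\in\F^*$; this is what $I$-reverse pseudo-monicity supplies. The plan has three steps, following the pattern of Lemma~\ref{lem:pseudo-exact-root}: first normalize $h$ to a reverse-monic factor without changing any sparsity, then expand $g=f/h$ as a truncated geometric series in the ideal generated by $x_I$, and finally count monomials.

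\textbf{Step 1: normalization.} Introduce a fresh variable $y$ and apply the substitution $x_i\mapsto yx_i$ for $i\in I$. Since $h$ is multilinear, the $y$-degree of a term of $h$ equals the number of its $x_I$-variables; in particular it is at most $|I|$. Write $\alpha=h|_{y=0}=\lambda\mathbf{x}^{\mathbf{m}}$. Next apply the map $\Psi\colon y\mapsto y\alpha$ and divide by $\alpha$. This is exactly the construction in the proof of Lemma~\ref{lem:pseudo-exact-root}, which the paper attributes to \cite[Lemma 6.16]{BishtVolkovich2025}. It produces $h^*$ with $h^*|_{y=0}=1$ and $\|h^*\|_0=\|h\|_0$. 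Apply the same operations to $f$ and $g$, dividing each by its own $y=0$ specialization. The relation $f=gh$ gives $f|_{y=0}=g|_{y=0}\cdot\alpha$. Here I must check that $\Psi$ followed by division keeps everything polynomial. For $h^*$ this holds by construction. For $g^*$ and $f^*$, I will instead divide only by the monomial factor of $\alpha$ raised to the $y$-degree minus one. A cleaner alternative is to work in the Laurent ring, where $\Psi$ is injective on supports for each fixed $y$-degree, and count supports there. Either way we obtain $f^*=g^*h^*$ with $\|f^*\|_0=\|f\|_0\le s$, $\|g^*\|_0=\|g\|_0$, $\|h^*\|_0\le s$, and $h^*\equiv 1\pmod y$.

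\textbf{Step 2: series expansion.} Now
\[
  g^*=f^*\cdot (h^*)^{-1}=f^*\sum_{j\ge0}(1-h^*)^j
\]
in $\F[x][[y]]$. The individual degree of $g$ in $x_I$ is at most $d$, so $\tdeg_y(g^*)\le d|I|$. Hence $g^*$ equals the truncation $f^*\sum_{j=0}^{d|I|}(1-h^*)^j$ modulo $y^{d|I|+1}$. Note that $1-h^*$ has at most $s$ terms, since the constant $1$ cancels and $\|h^*\|_0\le s$.

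\textbf{Step 3: counting.} Truncation does not increase support, so
\[
  \|g\|_0=\|g^*\|_0\le s\sum_{j=0}^{d|I|}s^j\le s\cdot s^{d|I|+1}=s^{d|I|+2},
\]
using $s\ge2$; the case $s=1$ is trivial.

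The main obstacle is Step 1. I have to verify that the pseudo-monic normalization preserves the support sizes of all three of $f$, $g$, $h$ simultaneously, and that it keeps $\tdeg_y(g^*)$ bounded by $d|I|$. The first concern is handled by the translation-per-$y$-degree argument from Lemma~\ref{lem:pseudo-exact-root}. For the second, I will use that this normalization only shifts $x$-exponents and never changes $y$-degrees.
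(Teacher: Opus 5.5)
Your argument is correct in substance, and it goes further than the paper. The paper's proof of Corollary~\ref{cor:pseudo} is only a citation of \cite[Corollary 6.17]{BishtVolkovich2025}, with the hypotheses described vaguely as ``the setup of the source theorem.'' You instead give a self-contained proof in the pattern of Lemma~\ref{lem:pseudo-exact-root}. You normalize the pseudo-monic factor to a reverse-monic one, replace the binomial series by the geometric series $g^*=f^*\sum_j(1-h^*)^j$, and truncate at $y$-degree $d|I|$. This makes the needed hypotheses explicit. You only need $h|_{x_I=0}$ to be a nonzero scalar times a monomial, $\|f\|_0,\|h\|_0\le s$, and total $x_I$-degree of $g$ at most $d|I|$. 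It even yields the sharper count $\|g\|_0\le\|f\|_0\sum_{j=0}^{d|I|}(\|h\|_0-1)^j$. The citation's advantage is only brevity.

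Step~1 needs two repairs.

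First, drop ``since $h$ is multilinear.'' The corollary assumes no such thing. In its only use (Theorem~\ref{thm:multi} with $I=\{y\}$), the factor $H^\sharp$ can have large $y$-degree. Fortunately you never use this claim.

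Second, ``dividing each by its own $y=0$ specialization'' fails for $f$ and $g$. Only $h|_{x_I=0}$ is known to be a monomial; $g|_{x_I=0}$ may be an arbitrary polynomial in the remaining variables. Dividing by it destroys both polynomiality and the support count, and your later phrase about dividing by powers of $\alpha$ does not pin down a consistent normalization. The right choice is to leave $g$ undivided and divide $f$ by $\alpha$:
\[
g^*=\Psi(g),\qquad h^*=\alpha^{-1}\Psi(h),\qquad f^*=\alpha^{-1}\Psi(f),
\]
all applied after $x_i\mapsto yx_i$ for $i\in I$.
\begin{itemize}
\item Since $\Psi$ is a ring homomorphism, $f^*=g^*h^*$, and $f^*$ is automatically a polynomial because $g^*h^*$ is.
\item Both $x_i\mapsto yx_i$ and $\Psi\colon \mathbf{x}^{\mathbf{u}}y^j\mapsto\lambda^j\mathbf{x}^{\mathbf{u}+j\mathbf{m}}y^j$ are injective on monomials and preserve $y$-degree.
\item Hence $\|f^*\|_0=\|f\|_0$, $\|g^*\|_0=\|g\|_0$, $\|h^*\|_0=\|h\|_0$, and $\deg_y g^*\le d|I|$.
\end{itemize}
Your Laurent-ring fallback also works, but only with these same divisors. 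With this fix, Steps~2 and~3 go through verbatim.
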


\begin{proof}
This is Corollary 6.17 of Bisht--Volkovich \cite{BishtVolkovich2025}. A different assertion, namely that pseudo-monicity of a product automatically transfers to both factors, is not the cofactor bound used here. For exact powers, the source separately observes that $f=g^e$ preserves $I$-reverse pseudo-monicity between $f$ and $g$; for cofactors, Corollary 6.17 requires pseudo-monicity of the factor $h$ and bounds $g$.
\end{proof}

\begin{theorem}[\texorpdfstring{Sparsity bound for $g$ with multilinear $h$}{Sparsity bound for g with multilinear h}]\label{thm:multi}
Let $\F$ be a field, let $f,g\in \F[x_1,\ldots,x_n]\setminus \F$, and let $h\in \F[x_1,\ldots,x_n]$ be such that $f,g,h$ form a nonzero oriented factorization
\[
  f=gh.
\]
Suppose that the named factor $h$ is multilinear, that $\|f\|_0\leq s$, and that $\indeg(f)\leq d$. Write $D=\tdeg(f)$. Then
\[
  \|h\|_0\leq s,\qquad
  \|g\|_0\leq s^{(2d+2)D+2}.
\]
\end{theorem}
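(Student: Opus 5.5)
The plan has two independent parts: first the bound $\|h\|_0\le s$ by an elementary induction, then the bound on $g$ by repeating the weighted substitution of Theorem~\ref{thm:unified_sep} and invoking the pseudo-monic cofactor bound of Corollary~\ref{cor:pseudo}.

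\emph{Step 1: $\|h\|_0\le \|f\|_0$.} I would prove the stronger statement that for any nonzero $g$ and any multilinear nonzero $h$ one has $\|h\|_0\le\|gh\|_0$. The induction is on the number of variables that actually occur in $h$.
\begin{itemize}
\item If $h$ involves no variable, it is a nonzero constant and $\|h\|_0=1\le\|gh\|_0$.
\item Otherwise pick a variable $x_1$ occurring in $h$ and write $h=x_1h_1+h_0$, where $h_0,h_1$ are multilinear and free of $x_1$, and $h_1\ne0$.
\item Write $g=\sum_{j=b}^{t}x_1^jg_j$ with $g_t,g_b\ne0$.
\item The top $x_1$-coefficient of $gh$ is $h_1g_t$, in $x_1$-degree $t+1$. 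If $h_0\ne0$, the bottom $x_1$-coefficient is $h_0g_b$, in $x_1$-degree $b\le t<t+1$.
\item These occupy distinct $x_1$-degrees, so $\|gh\|_0\ge\|h_1g_t\|_0+\|h_0g_b\|_0$.
\item Since $h_0,h_1$ involve fewer variables, the induction hypothesis gives $\|h_1\|_0\le\|h_1g_t\|_0$ and $\|h_0\|_0\le\|h_0g_b\|_0$.
\item Also $\|h\|_0=\|h_1\|_0+\|h_0\|_0$, because $x_1h_1$ and $h_0$ have disjoint supports.
\item If $h_0=0$, the same argument uses only the top coefficient.
\end{itemize}
This gives $\|h\|_0\le\|f\|_0\le s$.

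\emph{Step 2: transformation.} Apply Lemma~\ref{lem:separator} to $\supp(f)\subseteq\{0,\dots,d\}^n$. This gives $\mathbf{k}\in\{1,\dots,2d+2\}^n$ and a unique lowest $\mathbf{k}$-weight monomial $c_0\mathbf{x}^{\mathbf{a}_0}$ of $f$. The weight-$\mathbf{k}$ initial form is multiplicative, so $\operatorname{in}(f)=\operatorname{in}(g)\operatorname{in}(h)$. A product of polynomials that is a single monomial forces each factor to be a single monomial, say $\gamma\mathbf{x}^{\mathbf{b}_0}$ and $\eta\mathbf{x}^{\mathbf{c}_0}$. Set $k_g=\langle\mathbf{b}_0,\mathbf{k}\rangle$, $k_h=\langle\mathbf{c}_0,\mathbf{k}\rangle$ and $k_f=k_g+k_h$, and define $F^\sharp,G^\sharp,H^\sharp$ as in Theorem~\ref{thm:unified_sep}. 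Exactly as there:
\begin{itemize}
\item all three are polynomials and $F^\sharp=G^\sharp H^\sharp$;
\item sparsities are preserved, so $\|F^\sharp\|_0\le s$, $\|H^\sharp\|_0=\|h\|_0\le s$ by Step~1, and $\|G^\sharp\|_0=\|g\|_0$;
\item $H^\sharp|_{y=0}=\eta\mathbf{x}^{\mathbf{c}_0}$, so $H^\sharp$ is $\{y\}$-reverse pseudo-monic;
\item the $y$-degree estimate gives $\tdeg_y(F^\sharp)\le D(2d+2)$, hence $\tdeg_y(G^\sharp)\le\tdeg_y(F^\sharp)\le(2d+2)D$, since $y$-degrees add in a product of nonzero polynomials.
\end{itemize}

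\emph{Step 3: cofactor bound.} Apply Corollary~\ref{cor:pseudo} to $F^\sharp=G^\sharp H^\sharp$ in the variables $x_1,\dots,x_n,y$, with $I=\{y\}$ and individual degree parameter $(2d+2)D$ for $G^\sharp$ in $y$. It yields
\[
\|g\|_0=\|G^\sharp\|_0\le s^{(2d+2)D\cdot1+2}.
\]

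The main obstacle is checking that the setup of Bisht--Volkovich Corollary~6.17 is matched. That corollary requires the same sparsity parameter to bound both the product and the pseudo-monic factor, which is exactly why Step~1 is needed to get $\|H^\sharp\|_0\le s$. It also requires the degree hypothesis only on $g$ in the $I$-variables, which Step~2 supplies. A secondary point to verify carefully is the strict separation $b<t+1$ in Step~1 when $g$ has a single $x_1$-degree; it holds because $t\ge b$ always.
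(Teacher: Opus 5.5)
Your proposal is correct and follows essentially the same route as the paper. Both arguments use the separating vector, then multiplicativity of the initial $\mathbf{k}$-form to make $H^\sharp$ $\{y\}$-reverse pseudo-monic, then support preservation and the bound $\tdeg_y(G^\sharp)\le(2d+2)D$, and finally Corollary~\ref{cor:pseudo} with $I=\{y\}$. The differences are minor: you prove $\|h\|_0\le\|f\|_0$ by your own induction where the paper cites Bisht--Volkovich Lemma~2.2, and you obtain $\tdeg_y(G^\sharp)\le\tdeg_y(F^\sharp)$ from additivity of $y$-degree where the paper bounds the weights of the monomials of $g$ using $\tdeg(g)\le D$.
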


\begin{proof}
Because $h$ is multilinear and divides $f$, Lemma 2.2 of Bisht--Volkovich gives
\[
  \|h\|_0\leq \|f\|_0\leq s
\]
\cite[Lemma 2.2]{BishtVolkovich2025}. For each variable $x_i$, degree in $x_i$ is additive in the nonzero product $f=gh$ after viewing the polynomial ring as a univariate polynomial ring over the remaining variables. Hence $\tdeg_{x_i}(g)\leq\tdeg_{x_i}(f)\leq d$, and also $\tdeg(g)\leq\tdeg(f)=D$.

Choose a vector $\mathbf{k}=(k_1,\ldots,k_n)\in\{1,\ldots,2d+2\}^n$ so that the weight $\langle \mathbf{a},\mathbf{k}\rangle$ has a unique minimum on $\operatorname{supp}(f)$; this is the bounded separating-vector lemma applied to $\operatorname{supp}(f)\subseteq\{0,\ldots,d\}^n$. Let $k_f,k_g,k_h$ be the minimum $\mathbf{k}$-weights of $f,g,h$, respectively, and define
\[
  F^\sharp=y^{-k_f}f(x_1y^{k_1},\ldots,x_ny^{k_n}),\quad
  G^\sharp=y^{-k_g}g(x_1y^{k_1},\ldots,x_ny^{k_n}),\quad
  H^\sharp=y^{-k_h}h(x_1y^{k_1},\ldots,x_ny^{k_n}).
\]
All three transformed polynomials lie in $\mathbb{F}[x_1,\ldots,x_n,y]$. Minimum weights in the nonzero product give $k_f=k_g+k_h$, so $F^\sharp=G^\sharp H^\sharp$. Since the initial $\mathbf{k}$-form of $f$ is one monomial and the polynomial ring is a domain, the initial $\mathbf{k}$-form of $h$ is also one monomial; equivalently, $H^\sharp|_{y=0}$ is a nonzero monomial, so $H^\sharp$ is $\{y\}$-reverse pseudo-monic. The scalar substitution preserves supports, hence
\[
  \|F^\sharp\|_0=\|f\|_0\leq s,\qquad
  \|H^\sharp\|_0=\|h\|_0\leq s,\qquad
  \|G^\sharp\|_0=\|g\|_0.
\]
For any monomial $\mathbf{x}^{\mathbf{b}}$ of $g$, its $y$-exponent in $G^\sharp$ is at most $\langle \mathbf{b},\mathbf{k}\rangle\leq(2d+2)\operatorname{tdeg}(\mathbf{x}^{\mathbf{b}})\leq(2d+2)D$. Thus $\operatorname{tdeg}_y(G^\sharp)\leq(2d+2)D$. Applying Corollary~\ref{cor:pseudo} with $I=\{y\}$ and $d$ there replaced by $(2d+2)D$ gives
\[
  \|G^\sharp\|_0\leq s^{(2d+2)D+2}.
\]
Support preservation gives the same bound for $g$.
\end{proof}

\section{\texorpdfstring{Deterministic Algorithm for Exact Powers $f=g^e$}{Deterministic Algorithm for Exact Powers f equals g to the e}}

\subsection{The Exact-Root Algorithm}

We consider the following deterministic exact \(e\)-th power problem in the polynomial ring \(\F[x_1,\ldots,x_n]\).

The input consists of a nonconstant polynomial
\[
  f\in \F[x_1,\ldots,x_n]
\]
and an integer \(e\ge 2\). The output is required to distinguish the following two cases.

First, if there exists a polynomial \(g\in \F[x_1,\ldots,x_n]\) such that
\[
  f=g^e,
\]
then the algorithm outputs such a base \(g\).

Second, if no such \(g\in \F[x_1,\ldots,x_n]\) exists, then the algorithm outputs ``not an \(e\)-th power''.

\paragraph{The reconstruction algorithm.}
The reconstruction uses the scalar pseudo-monic transformation of Theorem~\ref{thm:unified_sep}, and then applies the \(y\)-adic binomial-root idea of Bisht--Volkovich after normalizing the pseudo-monic initial term.

Choose the separating vector $\mathbf{k}=(k_1,\ldots,k_n)$ and the unique lowest term
$c\mathbf{x}^{\mathbf{a}}$ of $f$ as in  Theorem~\ref{thm:unified_sep}. Put
\[
  k_f=\langle \mathbf{a},\mathbf{k}\rangle,\qquad
  F^\sharp=y^{-k_f}f(x_1y^{k_1},\ldots,x_ny^{k_n}).
\]
Then
\[
  F^\sharp\in \F[x_1,\ldots,x_n,y],\qquad
  F^\sharp=(G^\sharp)^E,\qquad
  F^\sharp(\mathbf{x},0)=c\mathbf{x}^{\mathbf{a}}
\]
for
\[
  G^\sharp=y^{-k_g}g(x_1y^{k_1},\ldots,x_ny^{k_n}).
\]
Write
\[
  \alpha=c\mathbf{x}^{\mathbf{a}},\qquad F^\sharp=\alpha+yF_1
\]
with $F_1\in \F[x_1,\ldots,x_n,y]$.

If the scalar $c$ has no $e$-th root in
$\F$, or if some coordinate of $a$ is not divisible by $e$, then there is no
base in $\F[x_1,\ldots,x_n]$ compatible with this lowest term. In that case the
algorithm returns ``not an \(e\)-th power'' for the exponent $e$, unless the field model permits
an explicitly represented extension field as an intermediate. In the
extension-field variant, the final output is accepted only if the specialized
candidate has all coefficients in $\F$ and passes the exact equality check
$g^e=f$ in $\F[x_1,\ldots,x_n]$.

Suppose first that an $e$-th root of $\alpha$ is available in $\F[x]$. Thus
choose $\beta\in \F^*$ and $\mathbf{b}\in\mathbb N^n$ such that
\[
  \beta^e=c,\qquad e\mathbf{b}=\mathbf{a},
\]
and set $B=\beta \mathbf{x}^{\mathbf{b}}$. Conceptually,
\[
  F^\sharp=B^e\left(1+\frac{yF_1}{\alpha}\right),
\]
so one wants the normalized unit root
\[
  \left(1+\frac{yF_1}{\alpha}\right)^{1/e}.
\]
The quotient by the monomial $\alpha$ may be Laurent in the $x$-variables, so
the polynomial implementation uses the pseudo-monic normalization
\[
  H=\alpha^{-1}F^\sharp(x_1,\ldots,x_n,y\alpha).
\]
For a monomial of $F^\sharp$ with $y$-degree $j\geq 1$, substitution
$y\mapsto y\alpha$ contributes the factor $\alpha^j$, and division by
$\alpha$ leaves $\alpha^{j-1}$. Hence $H$ is an ordinary polynomial in
$\F[x_1,\ldots,x_n,y]$, and
\[
  H(x_1,\dots,x_n,0)=1.
\]
If $G^\sharp$ is the desired transformed root, then
\[
  R=\beta^{-1}\mathbf{x}^{-\mathbf{b}}G^\sharp(x_1,\ldots,x_n,y\alpha)
\]
is also an ordinary polynomial and satisfies $H=R^e$ and $R(x_1,\dots,x_n,0)=1$. Thus $H$
is the reverse-monic input to which the binomial expansion applies after
removing any characteristic-dividing part of $e$.

\begin{algorithm}
\caption{Base reconstruction from the scalar pseudo-monic form}
\label{alg:phase2-base-reconstruction}
\begin{algorithmic}[1]
\Require A nonconstant $f\in \F[x_1,\ldots,x_n]$ and an exponent $e\geq 2$.
\Ensure A polynomial $g\in \F[x_1,\ldots,x_n]$ with $g^e=f$, or output ``not an $e$-th power''.
\State Choose $\mathbf{k}$ and the unique lowest term $c\mathbf{x}^{\mathbf{a}}$ of $f$ as in Theorem~\ref{thm:unified_sep}.
\State Set $k_f=\langle \mathbf{a},\mathbf{k}\rangle$ and $F^\sharp=y^{-k_f}f(x_1y^{k_1},\ldots,x_ny^{k_n})$.
\State Write $\alpha=c\mathbf{x}^{\mathbf{a}}$ and $F^\sharp=\alpha+yF_1$.
\If{$c$ has no $e$-th root in $\F$ or some coordinate of $\mathbf{a}$ is not divisible by $e$}
\State \Return ``not an $e$-th power''.
\EndIf
\State Choose $\beta\in \F^*$ and $\mathbf{b}\in\mathbb N^n$ with $\beta^e=c$ and $e\mathbf{b}=\mathbf{a}$; set $B=\beta \mathbf{x}^{\mathbf{b}}$.
\State Form the reverse-monic polynomial $H=\alpha^{-1}F^\sharp(x_1,\ldots,x_n,y\alpha)$.
\State Set $p=\operatorname{char}(\F)$ and $e'=e$.
\If{$p>0$ and $p\mid e'$}
\State Write $e'=p^r q$ with $p\nmid q$.
\State Take the true $p^r$-Frobenius root of $H$: for each term $A_{\mathbf{u}} \mathbf{x}^{\mathbf{u}} y^{u_y}$, require every coordinate of $(\mathbf{u},u_y)$ to be divisible by $p^r$ and require a coefficient $A_{\mathbf{u}}^{1/p^r}\in \F$ with $(A_{\mathbf{u}}^{1/p^r})^{p^r}=A_{\mathbf{u}}$.
\If{this Frobenius-root operation is not defined over $\F$}
\State \Return ``not an $e$-th power''.
\EndIf
\State Replace $H$ by the resulting polynomial and set $e=q$.
\EndIf
\State Let $\Delta_y=\deg_y(F^\sharp)$ and $N=\lfloor \Delta_y/e\rfloor$.
\State Compute
\[
  R_0=\sum_{i=0}^{N}\binom{1/e}{i}(H-1)^i \pmod {y^{N+1}}.
\]
\State Do not apply a $p^r$-power pullback to $R_0$ after this step.
\State Undo the pseudo-monic normalization by forming the Laurent expression
\[
  G^\sharp_{\mathrm{cand}}=B\,R_0(x_1,\ldots,x_n,y/\alpha).
\]
\If{$G^\sharp_{\mathrm{cand}}\notin \F[x_1,\ldots,x_n,y]$}
\State \Return ``not an $e$-th power''.
\EndIf
\State Set $g_{\mathrm{cand}}=G^\sharp_{\mathrm{cand}}(x_1,\ldots,x_n,1)$.
\If{$g_{\mathrm{cand}}\in \F[x_1,\ldots,x_n]$ and $g_{\mathrm{cand}}^e=f$}
\State \Return $g_{\mathrm{cand}}$.
\Else
\State \Return ``not an $e$-th power''.
\EndIf
\end{algorithmic}
\end{algorithm}

When $\operatorname{char}(\F)=p>0$ divides the active exponent, the binomial
coefficients with $1/e$ are not defined. We write $e=p^rq$ with $p\nmid q$
and use a true inverse Frobenius-root convention. The reduction of $H$ divides
all exponent vectors by $p^r$ and takes $p^r$-th roots of coefficients in $F$:
for
\[
  P=\sum_u A_u x^u y^{u_y}
\]
the reduced polynomial exists over $\F$ only when every coordinate of $(u,u_y)$
is divisible by $p^r$ and each $A_u$ has a $p^r$-th root in $\F$. Since a field
has no nonzero nilpotents, the Frobenius map on $\F$ is injective, so such a
coefficient root is unique when it exists. If this condition fails over $\F$,
the algorithm cannot output a base over $\F$ from this branch. Extension-field
computations, if allowed by the field model, remain only candidate generation,
followed by coefficient membership and exact equality checks over $\F$.

With this convention there is no post-binomial $p^r$-power pullback. Before
the reduction the normalized polynomial has the form
\[
  H=R^e=R^{p^rq}
\]
with $R(x_1,\dots,x_n,0)=1$. In characteristic $p$,
\[
  R^{p^rq}=(R^q)^{p^r}.
\]
Taking the true inverse Frobenius root of $H$ therefore gives $H_0=R^q$. The
active exponent after the reduction is $q$, which is invertible in $\F$, and
the binomial expansion applied to $H_0$ computes the unique normalized
$q$-th root with $y=0$ specialization $1$, namely $R$ itself. Thus $R_0$ is
already the normalized root of the original $H$, and the next step is to undo
the pseudo-monic normalization by forming $B\,R_0(x_1,\dots,x_n,y/\alpha)$. Applying
$R_0(x_1^{p^r},\ldots,x_n^{p^r},y^{p^r})$ at this point would change the root
and is not part of this convention.

This differs from the printed Algorithm 2 reversal in Steps 10--13 because
those steps belong to the source's variable-renaming convention. The present
construction uses the coefficient-and-exponent inverse Frobenius root instead;
once that stronger reduction is used, the source pullback has already been
accounted for algebraically by the equality $H_0=R^q$ and must not be repeated.
For example, if $\operatorname{char}(\F)=p$, $e=p$, and
\[
  H=1+y^p=(1+y)^p,
\]
then the true inverse Frobenius root is $H_0=1+y$ and the active exponent is
$q=1$. The binomial step returns $R_0=1+y$, already the normalized root. A
subsequent replacement by $R_0(y^p)=1+y^p$ would have $p$-th power
$1+y^{p^2}$, not $H$.

The displayed expression
\[
  \left(1+\frac{yF_1}{\alpha}\right)^{1/e}
\]
is best read as the normalized unit root in the Laurent ring
$F[x_1^{\pm 1},\ldots,x_n^{\pm 1}][[y]]$. The actual polynomial input to the
binomial expansion is
\[
  H=\alpha^{-1}F^\sharp(x,y\alpha).
\]
Since $F^\sharp=\alpha+yF_1$, the $y$-degree zero part of $H$ is $1$. For a
term $My^j$ of $F^\sharp$ with $j\geq 1$, the transformation sends it to
$M\alpha^{j-1}y^j$, which is a polynomial term because $\alpha$ is a monomial
times a nonzero scalar. Thus $H$ is reverse-monic in $y$, and the Algorithm
2-style expansion is applied to an ordinary polynomial. The inverse substitution
$y\mapsto y/\alpha$ may temporarily produce Laurent monomials; therefore the
algorithm explicitly checks that $G^\sharp_{\mathrm{cand}}$ is an ordinary
polynomial before specializing $y=1$.

The construction of $F^\sharp$ is polynomial because $k_f$ is the minimum
$k$-weight among the monomials of $f$, so every exponent
$\langle \mathbf{u},\mathbf{k}\rangle-k_f$ appearing after the substitution is nonnegative. The
same homomorphism calculation as in Theorem~\ref{thm:unified_sep} gives
$F^\sharp=(G^\sharp)^e$ whenever $f=g^e$, and specialization at $y=1$ gives
\[
  G^\sharp(x_1,\dots,x_n,1)=g(x_1,\dots,x_n).
\]
Thus the scalar transformation preserves exact powers and is inverted on the
base by setting $y=1$.

The root extraction of $\alpha=c\mathbf{x}^{\mathbf{a}}$ is necessary over $\F$. Indeed, from
$F^\sharp=(G^\sharp)^e$ and $F^\sharp(x_1,\dots,x_n,0)=c\mathbf{x}^{\mathbf{a}}$, the specialization
$G^\sharp(x_1,\dots,x_n,0)$ has $e$-th power $c\mathbf{x}^{\mathbf{a}}$. In a polynomial ring over a field, a
polynomial whose positive power is a single nonzero monomial must itself be a
single nonzero monomial. Hence $G^\sharp(x_1,\dots,x_n,0)=\beta \mathbf{x}^{\mathbf{b}}$ with $\beta^e=c$ and
$e\mathbf{b}=\mathbf{a}$. Failure of either condition rules out a base over $\F$ with the
prescribed transformed lowest term.

After pseudo-monic normalization, $H=R^e$ with $R(x_1,\dots,x_n,0)=1$. If the
characteristic does not divide the active exponent, the coefficients
$\binom{1/e}{i}$ are defined in $\F$, and the $y$-adic binomial expansion gives
the unique root with constant term $1$ modulo $y^{N+1}$. The truncation level
$N=\lfloor\deg_y(F^\sharp)/E\rfloor$ is sufficient because
$\deg_y(G^\sharp)=\deg_y(F^\sharp)/e$ for a genuine exact power, and the
normalization preserves $y$-degree. In the characteristic-dividing case, the
true Frobenius-root reduction replaces $H=R^{p^rq}$ by $H_0=R^q$, so the same
binomial argument with active exponent $q$ recovers $R$ itself. No additional
$p^r$-power substitution is applied before inverse normalization.

Finally, all possible false candidates are rejected by the last check. The
normalization, the Laurent inverse substitution, or an extension-field scalar
root can introduce an expression that is not a polynomial over $F$, and the
algorithm rejects it. Even if a polynomial candidate over $\F$ is produced, the
algorithm returns it only after the exact multiplication check
$g_{\mathrm{cand}}^e=f$ in $\F[x_1,\ldots,x_n]$. Therefore Algorithm \ref{alg:phase2-base-reconstruction} never outputs
a base outside the original field and never accepts an incorrect base.
The detailed field-operation accounting, including scalar-root construction,
the cost of Frobenius-root checks, inverse normalization, specialization, and
final exact verification, remains deferred to Task 6.

\begin{proposition}[Correctness of the deterministic exact-root algorithm]
\label{prop:exact-root-correctness}
Let $f\in \F[x_1,\ldots,x_n]\setminus \F$ and let $e\ge 2$ be an integer.
The reconstruction procedure in Algorithm~\ref{alg:phase2-base-reconstruction}
has the following correctness property.

\begin{enumerate}
\item If there exists $g\in \F[x_1,\ldots,x_n]$ such that $f=g^e$, then the
algorithm outputs a polynomial $\tilde g\in \F[x_1,\ldots,x_n]$ satisfying
$\tilde g^e=f$.

\item If no such $g\in \F[x_1,\ldots,x_n]$ exists, then the algorithm
outputs ``not an $e$-th power''.
\end{enumerate}
\end{proposition}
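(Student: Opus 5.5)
Part 2 is immediate from the final verification step, so the real work is part 1. If the algorithm returns a polynomial at all, it returns $g_{\mathrm{cand}}\in\F[x_1,\ldots,x_n]$, and only after checking $g_{\mathrm{cand}}^e=f$. So whenever no $e$-th root of $f$ exists over $\F$, every branch ends in ``not an $e$-th power''. For part 1, assume $f=g^e$ with $g\in\F[x_1,\ldots,x_n]$. I will follow the algorithm step by step, show that no early rejection fires, and show that the final candidate is exactly $g$ up to an $e$-th root of unity.

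\textbf{Lowest term.} Theorem~\ref{thm:unified_sep} gives the separating vector $\mathbf{k}$ with unique lowest term $c\mathbf{x}^{\mathbf{a}}$. It also gives $\mathbf{a}=e\mathbf{b}_0$ and $c=\gamma^e$ with $\gamma\in\F^*$, so the test in Step~4 passes. The algorithm then picks some $\beta$ with $\beta^e=c$ and $\mathbf{b}=\mathbf{b}_0$. Set $\zeta=\gamma/\beta$, so $\zeta^e=1$, and replace $g$ by $\zeta^{-1}g$. This is still an $e$-th root of $f$, and after the replacement $G^\sharp(\mathbf{x},0)=B$. Theorem~\ref{thm:unified_sep} further gives $F^\sharp=(G^\sharp)^e$ with $G^\sharp\in\F[\mathbf{x},y]$.

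\textbf{Normalization and Frobenius step.} Following the pseudo-monic normalization in the proof of Lemma~\ref{lem:pseudo-exact-root}, set $R=B^{-1}G^\sharp(\mathbf{x},y\alpha)$. This is an ordinary polynomial: the degree-zero part gives $1$, and a term of $y$-degree $j\ge1$ becomes a monomial times $\alpha^j/B$, which is a polynomial because $\alpha=B^e$. Then $H=R^e$ and $R(\mathbf{x},0)=1$.
\begin{itemize}
\item If $p\mid e$ with $e=p^rq$, then $H=(R^q)^{p^r}$. Since Frobenius is a ring homomorphism, every exponent of $H$ is divisible by $p^r$ and every coefficient is a $p^r$-th power of a coefficient of $R^q$. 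Hence the inverse Frobenius root exists over $\F$. Injectivity of Frobenius makes it unique, so it equals $H_0=R^q$ and the algorithm does not reject.
\item In either case the active exponent $q$ (or $e$ itself) is now invertible in $\F$.
\end{itemize}

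\textbf{Binomial step and reconstruction.} Corollary~\ref{binomialExactPower} applies with the active exponent and truncation bound $N$. The normalization keeps $y$-degrees, so $\deg_y R=\deg_y G^\sharp=\deg_y F^\sharp/e$. The corollary therefore gives $R_0=R$ exactly, not merely modulo $y^{N+1}$.

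I expect the main obstacle to be the choice of $N$ in the Frobenius branch. The algorithm takes $N=\lfloor\Delta_y/e'\rfloor$ with the updated $e'=q$, while $\Delta_y=\deg_y F^\sharp$ refers to the unreduced polynomial. So $N=\lfloor \deg_y(F^\sharp)/q\rfloor\ge \deg_y(F^\sharp)/(p^rq)=\deg_y R$. This only over-truncates, which is harmless because the corollary is stated with $\tdeg_y(H)\le N$, and $R$ is determined once $N\ge\deg_y R$. I would verify this inequality explicitly rather than rely on equality.

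It then remains to undo the substitution. We get $B\,R_0(\mathbf{x},y/\alpha)=G^\sharp$, which is a polynomial, so the Laurent check passes. Setting $y=1$ gives $G^\sharp(\mathbf{x},1)=\zeta^{-1}g$ for the original $g$, and its $e$-th power is $f$. Hence the algorithm outputs $\tilde g=\zeta^{-1}g$ with $\tilde g^e=f$.
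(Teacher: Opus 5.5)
Your proposal is correct and follows the paper's own route. Like the paper, you use Theorem~\ref{thm:unified_sep} for the lowest term, the normalization $R=B^{-1}G^\sharp(\mathbf{x},y\alpha)$, the inverse-Frobenius reduction $H=(R^q)^{p^r}\mapsto H_0=R^q$, Corollary~\ref{binomialExactPower} to recover $R$ exactly, then inverse substitution and $y=1$, with soundness coming from the final verification. Your two refinements fix points the paper passes over: rescaling by $\zeta=\gamma/\beta$ (the paper tacitly assumes the algorithm's $\beta$ is the lowest coefficient of $G^\sharp$, so it claims $g_{\mathrm{cand}}=g$ where only $\zeta^{-1}g$ is guaranteed), and checking that $N=\lfloor\deg_y(F^\sharp)/q\rfloor\ge\deg_y R$ in the Frobenius branch.
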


\begin{proof}
We prove the reconstruction step for the given exponent $e\ge 2$. Suppose $f=g^e$ for some $g\in \F[x_1,\ldots,x_n]$. Apply Theorem~\ref{thm:unified_sep} to $f=g^e$. It supplies a separating vector $\mathbf{k}=(k_1,\ldots,k_n)$, a unique lowest term $c\mathbf{x}^{\mathbf{a}}$ of $f$, and an integer $k_f=\langle \mathbf{a},\mathbf{k}\rangle$ such that
\[
  F^\sharp(\mathbf{x},y)=y^{-k_f}f(x_1y^{k_1},\ldots,x_ny^{k_n})
\]
is an ordinary polynomial. Indeed, for a term $c_{\mathbf{u}}\mathbf{x}^{\mathbf{u}}$ of $f$, the resulting $y$-exponent is $\langle \mathbf{u},\mathbf{k}\rangle-k_f$, which is nonnegative by the defining minimality of $k_f$. Theorem~\ref{thm:unified_sep} also gives
\[
  G^\sharp(\mathbf{x},y)=y^{-k_g}g(x_1y^{k_1},\ldots,x_ny^{k_n})
  \in \F[x_1,\ldots,x_n,y],
\]
with $k_f=e k_g$, and the ring-homomorphism calculation gives
\[
  F^\sharp=(G^\sharp)^e,\qquad F^\sharp(\mathbf{x},0)=c\mathbf{x}^{\mathbf{a}}.
\]
Specializing the displayed definition of $G^\sharp$ at $y=1$ gives
\[
  G^\sharp(\mathbf{x},1)=g(\mathbf{x}).
\]
Thus the scalar transformation preserves exact powers and recovers the original base by setting $y=1$ after the transformed base has been found.

We next justify the initial root extraction. Since
\[
  F^\sharp(\mathbf{x},0)=G^\sharp(\mathbf{x},0)^e=c\mathbf{x}^{\mathbf{a}},
\]
the polynomial $G^\sharp(\mathbf{x},0)$ has a positive power equal to one nonzero monomial. A polynomial with at least two distinct monomials cannot have a positive power with only one monomial: for any separating weight on its support, the lowest and highest weight monomials in the power are distinct and cannot cancel. Hence
\[
  G^\sharp(\mathbf{x},0)=\beta \mathbf{x}^{\mathbf{b}}
\]
for some $\beta\in \F^*$ and $\mathbf{b}\in\mathbb N^n$. Comparing the single monomial on both sides gives
\[
  \beta^e=c,\qquad e\mathbf{b}=\mathbf{a}.
\]
Therefore an $e$-th root of $c\mathbf{x}^{\mathbf{a}}$ over $\F[\mathbf{x}]$ is necessary. It is also sufficient for the normalization step: once $\beta$ and $\mathbf{b}$ satisfying these equations are chosen, $B=\beta \mathbf{x}^{\mathbf{b}}$ satisfies $B^e=c\mathbf{x}^{\mathbf{a}}$.

This explains the field-root language in the algorithm. If $c$ has no $e$-th root in $\F$, or if some coordinate of $\mathbf{a}$ is not divisible by $e$, then no base over $\F[x_1,\ldots,x_n]$ can produce the displayed initial monomial. If a root is computed in an extension field, it is used only to generate candidates. Such a candidate is not accepted as an output over $\F$ unless, after inverse normalization and specialization, all coefficients lie in $\F$ and the equality $g_{\mathrm{cand}}^e=f$ holds in $\F[x_1,\ldots,x_n]$.

Put
\[
  \alpha=c\mathbf{x}^{\mathbf{a}}=B^e,\qquad F^\sharp=\alpha+yF_1.
\]
The direct expression $(1+yF_1/\alpha)^{1/e}$ may live in a Laurent ring in the $\mathbf{x}$-variables, so the algorithm uses the polynomial pseudo-monic normalization
\[
  H=\alpha^{-1}F^\sharp(x_1,\ldots,x_n,y\alpha).
\]
For the $y$-degree zero term $\alpha$ of $F^\sharp$, this transformation gives $1$. For a term $M(\mathbf{x})y^j$ of $F^\sharp$ with $j\ge 1$, substitution $y\mapsto y\alpha$ gives $M(\mathbf{x})\alpha^j y^j$, and division by $\alpha$ gives $M(\mathbf{x})\alpha^{j-1}y^j$, an element of $\F[x_1,\ldots,x_n,y]$ because $\alpha$ is a nonzero scalar times a monomial. Thus
\[
  H\in \F[x_1,\ldots,x_n,y],\qquad H(\mathbf{x},0)=1.
\]
Now define
\[
  R=B^{-1}G^\sharp(x_1,\ldots,x_n,y\alpha).
\]
The same monomial calculation, applied to $G^\sharp$ and $B$, shows that $R$ is an ordinary polynomial and $R(\mathbf{x},0)=1$. Moreover,
\[
  H
  =\alpha^{-1}F^\sharp(\mathbf{x},y\alpha)
  =B^{-e}G^\sharp(\mathbf{x},y\alpha)^e
  =R^e.
\]
Thus $H$ is reverse-monic in the variable $y$ and is the $e$-th power of the normalized root $R$.

We now prove the binomial step. Refer to Corollary \ref{binomialExactPower}.
Since $F^\sharp=(G^\sharp)^e$, one has $\deg_y(F^\sharp)=e\deg_y(G^\sharp)$. The substitution $y\mapsto y\alpha$ and multiplication by $B^{-1}$ do not change $y$-degrees, so
\[
  \deg_y(R)=\deg_y(G^\sharp)=\deg_y(F^\sharp)/e\le N.
\]
Thus equality modulo $y^{N+1}$ determines the entire polynomial $R$. The binomial truncation therefore returns $R$, not merely an approximation.

It remains to handle the case in which $\operatorname{char}(\F)=p>0$ divides $e$. Write
\[
  e=p^r q,\qquad p\nmid q.
\]
Before the binomial expansion, the algorithm takes a true inverse Frobenius root of $H$: it divides all monomial exponent vectors, including the exponent of $y$, by $p^r$, and it takes the corresponding $p^r$-th roots of coefficients in $\F$. In the promised case this operation is defined, because
\[
  H=R^e=R^{p^r q}=(R^q)^{p^r}
\]
in characteristic $p$. The Frobenius map on a field is injective, so when the required coefficient roots lie in $\F$ they are unique. The result of the true inverse Frobenius reduction is
\[
  H_0=R^q.
\]
Now $q$ is invertible in $\F$, so the preceding uniqueness and truncation argument applies to $H_0$ with active exponent $q$. It recovers the same normalized polynomial $R$. No subsequent substitution
\[
  R(\mathbf{x},y)\mapsto R(x_1^{p^r},\ldots,x_n^{p^r},y^{p^r})
\]
is applied, because the inverse Frobenius reduction has already converted $H=R^{p^r q}$ into $H_0=R^q$.

After the normalized root $R$ is computed, the algorithm forms
\[
  G^\sharp_{\mathrm{cand}}=B\,R(x_1,\ldots,x_n,y/\alpha).
\]
For the true normalized root constructed above, this is exactly $G^\sharp$, since the definition
\[
  R=B^{-1}G^\sharp(\mathbf{x},y\alpha)
\]
is inverted by substituting $y/\alpha$ for $y$ and multiplying by $B$. Hence $G^\sharp_{\mathrm{cand}}\in \F[x_1,\ldots,x_n,y]$ in the promised exact-power case, and
\[
  g_{\mathrm{cand}}=G^\sharp_{\mathrm{cand}}(\mathbf{x},1)=G^\sharp(\mathbf{x},1)=g.
\]
The final multiplication check then accepts, because $g_{\mathrm{cand}}^e=f$.

For arbitrary intermediate candidates, the last steps are also sound. If inverse normalization produces a Laurent expression or a polynomial with coefficients outside $\F$, the algorithm rejects it and therefore outputs no invalid base over $\F$. If it produces a polynomial $g_{\mathrm{cand}}\in \F[x_1,\ldots,x_n]$, the algorithm accepts only after checking the exact identity
\[
  g_{\mathrm{cand}}^e=f
\]
in the original polynomial ring. Therefore every accepted output is correct.

Combining this reconstruction result with the degree bound gives the proposition. Under the exact-power promise, the genuine transformed root satisfies all necessary scalar, monomial, Frobenius, polynomialhood, and final-equality checks, so the algorithm has a passing candidate. If an implementation using extension-field candidate generation or incomplete scalar-root search produces no passing candidate, the safe conclusion is reconstruction failure for that exponent, not a proof that $f$ is not an $e$-th power and not acceptance of an incorrect base. In the exact algorithm over the stated field model, the genuine candidate passes the final check.
\end{proof}

\begin{theorem}[Complexity of the deterministic exact-root algorithm]
\label{thm:alg6}
Let
\[
  f\in \F[x_1,\ldots,x_n]
\]
be given in sparse representation, and let $e\ge 2$ be an integer. Let
$s=\|f\|_0$,
\[
  d=\operatorname{indeg}(f),\qquad D=\tdeg(f).
\]
Then the algorithm computes a polynomial $g\in \F[x_1,\ldots,x_n]$ such that
$g^e=f$, or determines that no such $g$ exists, in time
\[
  \operatorname{poly}\bigl(s^{O(Dd)},n,d,D\bigr)+s\cdot R(e),
\]
where $R(e)$ denotes the cost of constructing a single $e$-th root of a scalar in the base field $\F$, and, when $\operatorname{char}(\F)\mid e$, the cost of computing a single Frobenius root of a scalar (including both existence testing and construction).
\end{theorem}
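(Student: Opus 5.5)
Correctness is already supplied by Proposition~\ref{prop:exact-root-correctness}, so the plan is to bound the running time of Algorithm~\ref{alg:phase2-base-reconstruction} step by step. The main tool is a uniform bound on the sparsity of every intermediate polynomial. Fix $\Delta_y=\deg_y(F^\sharp)\le D(2d+2)$, as in Theorem~\ref{thm:unified_sep}, and set $N=\lfloor \Delta_y/e\rfloor\le D(2d+2)/2$. Every polynomial the algorithm manipulates will be shown to have at most $s^{N+1}\le s^{O(Dd)}$ terms and exponents of size $\poly(n,d,D)$. Since $y$-degrees never exceed $\Delta_y$ and the $\mathbf{x}$-exponents stay bounded, each sparse multiplication or truncation then costs $\poly(s^{O(Dd)},n,d,D)$ field operations.

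\textbf{Step 1: the separating vector.} Lemma~\ref{lem:separator} is constructive. Compute $\Phi(\mathbf{a})=\sum_i\phi(a_i)$ for each of the $s$ support vectors, take a minimizer $\mathbf{a}_0$, and set $k_i=2(d-a_{0i}+1)$. This costs $O(snd)$ operations.

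\textbf{Step 2: forming $F^\sharp$ and $H$.} Forming $F^\sharp$ relabels exponents only. It preserves the $s$ terms, and every $y$-degree is at most $D(2d+2)$. The root test for $\alpha=c\mathbf{x}^{\mathbf{a}}$ needs one divisibility check per coordinate of $\mathbf{a}$ and one scalar $e$-th root, which costs $R(e)$. For $H=\alpha^{-1}F^\sharp(\mathbf{x},y\alpha)$, each term of $y$-degree $j$ is multiplied by $\alpha^{j-1}$. This keeps $s$ terms, and the $\mathbf{x}$-exponents grow by at most $\Delta_y\cdot \mathbf{a}$. Those entries are bounded by $D(2d+2)d$, so their bit size is polynomial.

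\textbf{Step 3: the Frobenius reduction.} When $p\mid e$, the reduction divides exponents and takes a $p^r$-th root of each of the at most $s$ coefficients. This is where the term $s\cdot R(e)$ comes from.

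\textbf{Step 4: the binomial sum.} Compute $R_0=\sum_{i\le N}\binom{1/e}{i}(H-1)^i$ by iterated multiplication modulo $y^{N+1}$. The power $(H-1)^i$ has at most $s^i$ terms, so the whole sum has at most $\sum_{i\le N}s^i\le s^{N+1}$ terms. Each multiplication costs at most $s^{i}\cdot s$ term products, and there are $N+1$ such multiplications. The coefficients $\binom{1/e}{i}$ are computed incrementally with $O(N)$ field operations, using that $e$ (or the reduced $q$) is invertible.

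\textbf{Step 5: undoing the normalization and verifying.} The substitution $y\mapsto y/\alpha$ and multiplication by $B$ act term by term. Checking polynomiality and setting $y=1$, which merges like terms, are linear in the size of $R_0$. The verification $g_{\mathrm{cand}}^e=f$ is the step I expect to be the main obstacle. Naively, $g_{\mathrm{cand}}^e$ can have up to $\|g_{\mathrm{cand}}\|_0^e$ terms, and $e$ is not bounded by $D$ a priori, so repeated squaring could blow up. I would handle this in two ways.

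First, reject cheaply before verifying. If $e\nmid \Delta_y$, or if $\tdeg(g_{\mathrm{cand}})\cdot e\ne D$, or if $\|g_{\mathrm{cand}}\|_0$ exceeds the bound $s^{D(2d+2)/e+1}$ of Theorem~\ref{thm:exact}, reject immediately. This is justified because these conditions hold whenever an exact root exists. After these checks, $e\le D$ whenever $f$ is nonconstant and $g_{\mathrm{cand}}$ is nonconstant.

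Second, compute $g_{\mathrm{cand}}^e$ by successive multiplications $g_{\mathrm{cand}}^{j+1}=g_{\mathrm{cand}}^j\cdot g_{\mathrm{cand}}$. Each intermediate power has total degree at most $D$ and individual degrees at most $d$. I would abort as soon as some intermediate $g_{\mathrm{cand}}^j$ has more terms than the bound of Theorem~\ref{thm:exact} applied to the exact power $f$. For a genuine root, every intermediate $g^j$ divides $f$ in a controlled way, since $g^j$ is itself an exact power with the same monomial structure. The hoped-for bound is $\|g^j\|_0\le s^{O(Dd)}$, obtainable from Lemma~\ref{lem:pseudo-exact-root} applied to $(G^\sharp)^j$ as a root of $F^\sharp$ over $j\mid e$, or more simply because $\|g^j\|_0\le \|g\|_0^{j}$ with $j\le e\le D$. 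The latter gives at most $s^{O(Dd)\cdot}$ growth only after combining with $\|g\|_0\le s^{D(2d+2)/e+1}$, so that $\|g\|_0^{e}\le s^{D(2d+2)+e}\le s^{O(Dd)}$. This last inequality is the clean way out, and I would use it: it bounds every intermediate product by $s^{O(Dd)}$ directly, with at most $e\le D$ multiplications.

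Summing the costs of Steps 1 through 5 gives $\poly(s^{O(Dd)},n,d,D)+s\cdot R(e)$.
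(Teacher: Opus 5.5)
Your proposal is correct and follows the paper's own route: you bound $\deg_y(F^\sharp)\le D(2d+2)$, charge the one scalar $e$-th root and the at most $s$ coefficient Frobenius roots to $s\cdot R(e)$, and bound the final check by $\bigl(s^{D(2d+2)/e+1}\bigr)^e=s^{D(2d+2)+e}$ with $e\le D$. Your unconditional count $\sum_{i\le N}s^i\le s^{N+1}$ for the powers $(H-1)^i$ and your sound pre-verification rejections, which enforce $e\le D$ and the sparsity bound on $g_{\mathrm{cand}}$, are more careful than the paper's text, which relies on the root's sparsity and on $e\le d\le D$, facts guaranteed only under the exact-power promise. One detail to fix: computing $\binom{1/e}{i}$ incrementally in $\F$ divides by $i+1$, which breaks when $0<\operatorname{char}(\F)\le N$ (e.g.\ $\F=\mathbb{F}_2$, $e=3$); instead compute these $p$-integral rationals over $\mathbb{Q}$ and reduce them modulo $p$, which keeps the cost polynomial.
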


\begin{proof}
For the scalar transform, Theorem~\ref{thm:unified_sep} supplies a separating vector
$\mathbf{k}\in\{1,\ldots,2d+2\}^n$ and forms
\[
  F^\sharp=y^{-k_f}f(x_1y^{k_1},\ldots,x_ny^{k_n}).
\]
Every monomial of $f$ has total degree at most $D$, so its $y$-exponent after
substitution is bounded above by $D(2d+2)$. Hence
\[
  \Delta_y:=\deg_y(F^\sharp)\le D(2d+2).
\]
In an exact $e$-th-power case, the transformed root $G^\sharp$ satisfies
$F^\sharp=(G^\sharp)^e$, and the structural sparsity bound gives
\[
  \|G^\sharp\|_0=\|g\|_0\le s^{D(2d+2)/e+1}.
\]

The pseudo-monic normalization $H=\alpha^{-1}F^\sharp(\mathbf{x},y\alpha)$
does not increase the $y$-degree. The binomial expansion is computed only up
to $N=\lfloor \Delta_y/e\rfloor$, because in a genuine exact power the
normalized root has $y$-degree $\Delta_y/e$. Under the stated sparse
arithmetic model, each monomial substitution, truncation, multiplication,
Laurent membership check, inverse normalization, and specialization is
polynomial in the number of sparse terms produced and in the degree
parameters. Since the produced root has sparsity at most
$s^{D(2d+2)/e+1}$, these operations are bounded by
\[
  \operatorname{poly}\bigl(s^{O(Dd/e)},n,d,D\bigr)
\]
apart from scalar-field operations.

The reconstruction phase requires $O(s)$ scalar root operations: one $e$-th root of the lowest scalar coefficient of $f$, and in positive characteristic with $\operatorname{char}(\F)\mid e$, up to $O(s)$ Frobenius root operations for the coefficients of the transformed polynomial $H$. Each such operation costs $R(e)$ as defined above. Hence the scalar-field contribution is $s\cdot R(e)$.

It remains to count final verification. The algorithm does not accept the
candidate unless it checks the exact identity $g_{\rm cand}^{\,e}=f$. Using
ordinary sparse powering with $\|g_{\rm cand}\|_0\le s^{D(2d+2)/e+1}$, the
intermediate sparse-size is bounded by
\[
  \left(s^{D(2d+2)/e+1}\right)^e
  =s^{D(2d+2)+e}.
\]
Since $e\le d\le D$ for nonconstant inputs, this is absorbed by
$\operatorname{poly}(s^{O(Dd)},n,d,D)$. Combining this verification cost with
the candidate-construction cost gives the total bound
\[
  \operatorname{poly}\bigl(s^{O(Dd)},n,d,D\bigr)+s\cdot R(e).
\]

If a verification oracle is assumed, its cost replaces the ordinary sparse-
powering term, yielding the conditional $s^{O(Dd/e)}$-scale bound.

The scalar term $R(e)$ packages two types of operations: scalar $e$-th root construction and, in positive characteristic, scalar Frobenius-root construction. Over finite fields, these costs are polynomial if deterministic root-construction routines are available. Over $\mathbb Q$, they reduce to integer/rational exact-root arithmetic. Over general number fields, the absorption requires a specified representation and a deterministic scalar-root algorithm. Thus $R(e)$ is absorbed into the polynomial term only in settings where the relevant scalar operations are themselves available with polynomial cost.
\end{proof}

\section{Conclusion}

In this paper we have studied the problem of deterministically computing exact roots of sparse polynomials. For a  polynomial $f\in \F[x_1,\ldots,x_n]$ with sparsity $s=\|f\|_0$, individual degree $d=\operatorname{indeg}(f)$, and total degree $D=\tdeg(f)$, we proved that if $f=g^e$ for some $g\in \F[x_1,\ldots,x_n]$ and $e\ge 2$, then
\[
  \|g\|_0 \le s^{D(2d+2)/e+1}.
\]
This sparsity bound is derived via a scalar pseudo-monic transformation followed by the pseudo-monic exact-root lemma. Importantly, the exponent depends only on $D$, $d$, and $e$, and not on the number of variables $n$, making it suitable for polynomial-time algorithms in the bounded-degree regime.

Based on this structural bound, we developed a deterministic algorithm that, given $f$ and an integer $e\ge 2$, decides whether $f=g^e$ for some $g\in \F[x_1,\ldots,x_n]$ and, if so, outputs such a $g$. The algorithm proceeds by a scalar weighted substitution to reduce the problem to the pseudo-monic setting, followed by a $y$-adic binomial expansion with Frobenius reduction in positive characteristic, inverse normalization, and final exact verification. The overall complexity is
\[
  \operatorname{poly}\bigl(s^{O(Dd)},n,d,D\bigr)+s\cdot R(e),
\]
where $R(e)$ denotes the cost of constructing a single $e$-th root of a scalar in the base field $\F$, and, when $\operatorname{char}(\F)\mid e$, the cost of computing a single Frobenius root of a scalar. When $d$ and $D$ are bounded, this yields a deterministic polynomial-time algorithm for exact-root computation, in contrast to the quasi-polynomial dependence of the general deterministic factorization algorithm of Bhargava, Saraf, and Volkovich.

In addition, we established a sparsity bound for multilinear cofactors: if $f=gh$ with $h$ multilinear, then $\|h\|_0\le s$ and $\|g\|_0\le s^{(2d+2)D+2}$. These bounds, while not the main focus of this work, provide structural information complementary to our exact-root results.

Several directions remain for future work. A natural next step is to extend the exact-root sparsity bound and algorithm to the general factorization setting $f=gh$, where both factors are arbitrary sparse polynomials with bounded degree. The current multilinear-cofactor bound represents a first step in this direction, but a complete deterministic factorization algorithm with polynomial complexity would require a polynomial-size factor-sparsity bound for all factors, or a more refined algorithmic framework that avoids such a bound. Another important direction is to eliminate the scalar-field cost $R(e)$ by developing deterministic polynomial-time scalar root algorithms for arbitrary fields, or to show that such costs are unavoidable. Finally, extending the exact-root algorithm to output a full factorization of $f$ into irreducible factors remains a challenging open problem.

\appendix

\section{Detailed Proofs}\label{app:detailed-proofs}

This appendix gives the detailed proof bookkeeping for the main statements. It is a reader-facing expansion of the proof package; it does not replace the theorem statements in the body and does not prove the stronger variants discussed above.

\subsection{Detailed proof of Theorem~\ref{thm:exact}}

We prove the exact-power theorem under the nonconstant hypotheses of
Theorem~\ref{thm:exact}. Since $f\notin F$, the polynomial $f$ is nonzero and
nonconstant. Thus $\supp(f)$ is finite and nonempty. Moreover
$\indeg(f)\geq 1$, so the hypothesis $\indeg(f)\leq d$ implies $d\geq 1$.

Let $A=\supp(f)$. Then $A$ is finite, nonempty, and contained in $\{0,\ldots,d\}^n$. The separating-vector lemma from the constant-free exact-root note gives a vector
\[
  k=(k_1,\ldots,k_n)\in \{1,\ldots,2d+2\}^n
\]
and an exponent $a_0\in A$ such that
\[
  \langle a_0,k\rangle < \langle a,k\rangle
  \qquad(a\in A,\ a\neq a_0).
\]
Writing the corresponding lowest monomial of $f$ as $c_0x^{a_0}$, with $c_0\in F^*$, the lowest-root-monomial lemma applies to $f=g^e$. It gives a unique lowest $k$-weight monomial $\gamma x^{b_0}$ of $g$ such that
\[
  \gamma\in F^*,\qquad a_0=eb_0,\qquad c_0=\gamma^e,\qquad
  \langle a_0,k\rangle=e\langle b_0,k\rangle.
\]
Set
\[
  k_f=\langle a_0,k\rangle,\qquad k_g=\langle b_0,k\rangle.
\]
Then $k_f=ek_g$. Define the scalar transformed polynomials
\[
  F^\sharp=y^{-k_f}f(x_1y^{k_1},\ldots,x_ny^{k_n}),\qquad
  G^\sharp=y^{-k_g}g(x_1y^{k_1},\ldots,x_ny^{k_n}).
\]
For a monomial $c_ax^a$ of $f$, the corresponding monomial in $F^\sharp$ is
\[
  c_ax^a y^{\langle a,k\rangle-k_f}.
\]
Because $k_f$ is the minimum $k$-weight on $\supp(f)$, all exponents of $y$ here are nonnegative. Hence $F^\sharp\in F[x_1,\ldots,x_n,y]$. The same argument with $k_g$ gives $G^\sharp\in F[x_1,\ldots,x_n,y]$.

The scalar substitution is a ring homomorphism. Therefore
\[
  F^\sharp
  =
  y^{-k_f}f(x_1y^{k_1},\ldots,x_ny^{k_n})
  =
  y^{-ek_g}g(x_1y^{k_1},\ldots,x_ny^{k_n})^e
  =
  (G^\sharp)^e.
\]
Specializing at $y=0$ keeps exactly the monomials of minimum $k$-weight. Since $a_0$ is the unique minimum exponent of $f$,
\[
  F^\sharp|_{y=0}=c_0x^{a_0},
\]
a single nonzero monomial. Thus $F^\sharp$ is $\{y\}$-reverse pseudo-monic.

The support comparison is direct. The transformation sends an exponent $a\in\supp(f)$ to
\[
  (a,\langle a,k\rangle-k_f)\in\mathbb N^{n+1}.
\]
This map is injective because its first $n$ coordinates recover $a$, and no nonzero coefficient is killed. Hence
\[
  \|F^\sharp\|_0=\|f\|_0\leq s.
\]
Similarly, $b\mapsto (b,\langle b,k\rangle-k_g)$ gives
\[
  \|G^\sharp\|_0=\|g\|_0.
\]

The scalar-substitution degree estimate gives the revised $y$-degree parameter. Write
\[
  f=c_1m_1+\cdots+c_tm_t,
  \qquad
  m_i=x_1^{e_1}\cdots x_n^{e_n}
\]
for a fixed monomial $m_i$ in the support of $f$. Under
\[
  x_j\mapsto x_jy^{k_j}\qquad(1\leq j\leq n),
\]
this monomial becomes
\[
  (x_1y^{k_1})^{e_1}\cdots (x_ny^{k_n})^{e_n}
  =
  x_1^{e_1}\cdots x_n^{e_n}
  y^{e_1k_1+\cdots+e_nk_n}.
\]
Hence its $y$-degree after the scalar substitution is
\[
  e_1k_1+\cdots+e_nk_n
  \leq (e_1+\cdots+e_n)(2d+2)
  \leq D(2d+2),
\]
because $k_j\leq 2d+2$ for every $j$ and every monomial of $f$ has total degree at most $D=\tdeg(f)$. Taking the maximum over the monomials of $f$ gives
\[
  \Dlin\leq D(2d+2).
\]
Since the exponent of $y$ in $F^\sharp$ is $\langle a,k\rangle-k_f\leq \langle a,k\rangle$, the same calculation gives
\[
  \tdeg_y(F^\sharp)\leq D(2d+2).
\]

We now apply the local pseudo-monic exact-root lemma to
\[
  H=F^\sharp,\qquad R=G^\sharp,\qquad S=s,\qquad \Delta=D(2d+2)
\]
in the polynomial ring $F[x_1,\ldots,x_n,y]$. Its preconditions are checked above: $F^\sharp$ is $s$-sparse, $\{y\}$-reverse pseudo-monic, has $y$-degree at most $D(2d+2)$, and is the $e$-th power of $G^\sharp$. Hence
\[
  \|G^\sharp\|_0\leq s^{D(2d+2)/e+1}.
\]

\subsection{Detailed proof of Theorem~\ref{thm:multi}}

The factorization is oriented: $f=gh$, the named factor $h$ is multilinear, and the product is nonzero. The nonzero convention is necessary for the factor-degree bridge and avoids false zero-product variants such as $g=0$ with arbitrary multilinear $h$.

First, Lemma 2.2 of Bisht--Volkovich applies to the named multilinear divisor $h$ of $f$. It gives
\[
  \|h\|_0\leq \|f\|_0\leq s.
\]
This is the multilinear orientation point: the theorem bounds the named multilinear factor $h$ directly, and the other factor $g$ is bounded by the cofactor theorem.

The degree hypothesis is $\indeg(f)\leq d$. Fix a variable $x_i$ and view
\[
  F[x_1,\ldots,x_n]=R[x_i],
  \qquad
  R=F[x_1,\ldots,x_{i-1},x_{i+1},\ldots,x_n].
\]
The coefficient ring $R$ is an integral domain. Since $f=gh$ is a nonzero product, degree in $x_i$ is additive:
\[
  \tdeg_{x_i}(f)=\tdeg_{x_i}(g)+\tdeg_{x_i}(h).
\]
Consequently $\tdeg_{x_i}(g)\leq \tdeg_{x_i}(f)\leq d$. This holds for each $i$, so $\indeg(g)\leq d$. Total degree is also additive for nonzero products, so $\tdeg(g)\leq\tdeg(f)=D$.

Since $\indeg(f)\leq d$, the support of $f$ lies in $\{0,\ldots,d\}^n$. Apply the bounded separating-vector lemma used in Section~4 to obtain
\[
  a_0\in\supp(f),\qquad k=(k_1,\ldots,k_n)\in\{1,\ldots,2d+2\}^n
\]
such that $a_0$ uniquely minimizes $w(a)=\langle a,k\rangle$ on $\supp(f)$. Write the corresponding term of $f$ as $c_0x^{a_0}$ with $c_0\in F^*$, and set $k_f=w(a_0)$. Define
\[
  F^\sharp=y^{-k_f}f(x_1y^{k_1},\ldots,x_ny^{k_n}).
\]
For a term $c_ax^a$ of $f$, its image in $F^\sharp$ is $c_ax^ay^{w(a)-k_f}$. The exponent $w(a)-k_f$ is nonnegative by minimality of $k_f$, hence $F^\sharp\in F[x_1,\ldots,x_n,y]$. The specialization at $y=0$ keeps exactly the unique lowest-weight term:
\[
  F^\sharp(x_1,\ldots,x_n,0)=c_0x^{a_0}.
\]

For a nonzero polynomial $P$, let
\[
  \mu(P)=\min\{\langle a,k\rangle:x^a\in\supp(P)\}.
\]
Set $k_g=\mu(g)$ and $k_h=\mu(h)$, and define
\[
  G^\sharp=y^{-k_g}g(x_1y^{k_1},\ldots,x_ny^{k_n}),\qquad
  H^\sharp=y^{-k_h}h(x_1y^{k_1},\ldots,x_ny^{k_n}).
\]
The same minimum-weight argument shows that $G^\sharp,H^\sharp\in F[x_1,\ldots,x_n,y]$. The substitution $P\mapsto P(x_1y^{k_1},\ldots,x_ny^{k_n})$ is a ring homomorphism, and minimum weights in the nonzero product satisfy $k_f=k_g+k_h$: every product term has weight at least $k_g+k_h$, and products of lowest-weight terms have that weight. Therefore
\[
  F^\sharp
  =
  \bigl(y^{-k_g}g(x_1y^{k_1},\ldots,x_ny^{k_n})\bigr)
  \bigl(y^{-k_h}h(x_1y^{k_1},\ldots,x_ny^{k_n})\bigr)
  =
  G^\sharp H^\sharp.
\]

Let $\operatorname{in}_k(P)$ denote the sum of the terms of $P$ with minimal $k$-weight. The equality $F^\sharp|_{y=0}=c_0x^{a_0}$ says that $\operatorname{in}_k(f)$ is a single monomial. Since $f=gh$,
\[
  \operatorname{in}_k(f)=\operatorname{in}_k(g)\operatorname{in}_k(h).
\]
In a polynomial ring over a field, if the product of two nonzero polynomials is a monomial, then each factor is a monomial. Indeed, with respect to any monomial order, a polynomial with more than one term has distinct least and greatest monomials; the least and greatest monomials of a product are the products of the corresponding least and greatest monomials. Thus $\operatorname{in}_k(h)$ is a nonzero monomial. Since
\[
  H^\sharp(x_1,\ldots,x_n,0)=\operatorname{in}_k(h),
\]
the transformed named multilinear factor $H^\sharp$ is $\{y\}$-reverse pseudo-monic.

The scalar substitution preserves supports. For each $P\in\{f,g,h\}$, a monomial $c_ax^a$ maps to $c_ax^ay^{\langle a,k\rangle-\mu(P)}$, and the exponent map $a\mapsto(a,\langle a,k\rangle-\mu(P))$ is injective. Hence
\[
  \|F^\sharp\|_0=\|f\|_0,\qquad
  \|G^\sharp\|_0=\|g\|_0,\qquad
  \|H^\sharp\|_0=\|h\|_0.
\]
In particular, $\|F^\sharp\|_0\leq s$ and $\|H^\sharp\|_0\leq s$.

It remains to bound the relevant individual degree for the cofactor estimate. If $c_bx^b$ is a term of $g$, then its $y$-exponent in $G^\sharp$ is $\langle b,k\rangle-k_g$. Since $k_g\geq0$ and $k_i\leq2d+2$ for all $i$,
\[
  \langle b,k\rangle-k_g
  \leq
  \langle b,k\rangle
  \leq
  (2d+2)\sum_i b_i
  \leq
  (2d+2)D.
\]
Therefore $\tdeg_y(G^\sharp)\leq(2d+2)D$.

Apply Corollary~\ref{cor:pseudo}, the $I$-reverse pseudo-monic cofactor result, to
\[
  F^\sharp=G^\sharp H^\sharp
\]
with $I=\{y\}$ and $\delta=(2d+2)D$. Its preconditions have been checked: $H^\sharp$ is $\{y\}$-reverse pseudo-monic, $F^\sharp$ and $H^\sharp$ are both $s$-sparse, and the individual degree of $G^\sharp$ in $y$ is at most $\delta$. Thus
\[
  \|G^\sharp\|_0\leq s^{(2d+2)D+2}.
\]
Since $\|G^\sharp\|_0=\|g\|_0$, this gives
\[
  \|g\|_0\leq s^{(2d+2)D+2}.
\]

\subsection{Detailed proof of Theorem~\ref{thm:alg6}}

Theorem~\ref{thm:alg6} is a promised algorithmic statement, not an arbitrary exact-power test. Its input polynomial $H\in F[z_1,\ldots,z_m]$ must already satisfy the following conditions:
\[
  \|H\|_0\leq S,\qquad \indeg(H)\leq \Delta,\qquad H|_{z_j=0}=1
\]
for some variable $z_j$, and it must be promised that $H=R^e$ for a polynomial $R$. Under exactly these hypotheses, Bisht--Volkovich Lemma 5.8 and Algorithm 2 compute $R$ in
\[
  \poly(S^{\Delta/e},m,\Delta)
\]
field operations.

The scalar exact-power transform from Section~4 supplies the pseudo-monic data
\[
  H=F^\sharp,\qquad R=G^\sharp,\qquad m=n+1,\qquad S=s,\qquad I=\{y\},\qquad \Delta=D(2d+2).
\]
Under the nonconstant hypotheses of Theorem~\ref{thm:exact}, the separating vector, lowest-root monomial, scalar transformation definitions, support comparison, and degree bound put the transformed input into $\{y\}$-reverse pseudo-monic form. The Phase 2 reconstruction in Section~6 normalizes this pseudo-monic input to a reverse-monic polynomial in $y$, applies the binomial-root step after the characteristic-dividing Frobenius reduction when needed, undoes the normalization, specializes $y=1$, and accepts only after an exact verification in $F[x_1,\ldots,x_n]$. The same transformed setup also gives the structural term bound
\[
  \|G^\sharp\|_0=\|g\|_0\leq s^{D(2d+2)/e+1}.
\]

\bibliographystyle{klmm}
\bibliography{refs}

\end{document}